\newtheorem{theorem}{Theorem}
\newtheorem{corollary}[theorem]{Corollary}
\newtheorem{lemma}[theorem]{Lemma}
\begin{document}

\title{Growth rate of binary words avoiding $xxx^R$}

\author{James Currie and Narad Rampersad}

\address{Department of Mathematics and Statistics \\
University of Winnipeg \\
515 Portage Avenue \\
Winnipeg, Manitoba R3B 2E9 (Canada)}

\email{\{j.currie,n.rampersad\}@uwinnipeg.ca}

\thanks{Both authors are supported by NSERC Discovery Grants. The first author also thanks Deutsche Forschungsgemeinschaft, which supported him through its Mercator program.}

\subjclass[2000]{68R15}

\date{\today}

\maketitle
\begin{abstract}
Consider the set of those binary words with no non-empty factors of
the form $xxx^R$.  Du, Mousavi, Schaeffer, and Shallit asked whether
this set of words grows polynomially or exponentially with length. In
this paper, we demonstrate the existence of upper and lower bounds on
the number of such words of length $n$, where each of these bounds is
asymptotically equivalent to a (different) function of the form
$Cn^{\lg n+c}$, where $C$, $c$ are constants and $\lg n$ denotes the
base-$2$ logarithm of $n$.
\end{abstract}

\section{Introduction}
In this paper we study the binary words avoiding the pattern $xxx^R$.
Here the notation $x^R$ denotes the ``reversal'' or ``mirror image''
of $x$.  For example, the word $011\,011\,110$ is an instance of $xxx^R$,
with $x=011$.  The avoidability of patterns with reversals has been
studied before, for instance by Rampersad and Shallit \cite{RS05} and by
Bischoff, Currie, and Nowotka \cite{BCN12, BN11, Cur11}.

The question of whether a given pattern with reversal is avoidable may
initially seem somewhat trivial.  For instance, the pattern $xx^R$ is
avoided by the periodic word $(012)^\omega$ and $xxx^R$, the pattern
studied in this paper, is avoided by the periodic word $(01)^\omega$.
However, looking at the entire class of binary words that avoid
$xxx^R$ reveals that these words have a remarkable structure.

Du, Mousavi, Schaeffer, and Shallit \cite{DMSS14} looked at binary
words avoiding $xxx^R$.  They noted that there are various periodic
words that avoid this pattern and also proved that a certain aperiodic
word studied by Rote \cite{Rot94} and related to the Fibonacci word also
avoids the pattern $xxx^R$.  They posed a variety of conjectures and
open problems concerning binary words avoiding $xxx^R$, notably: Does
the number of such words of length $n$ grow polynomially or
exponentially with $n$?

The growth rate of words avoiding a given pattern over a certain
alphabet is a fundamental problem in combinatorics on words (see the
survey by Shur \cite{Shu11}).
Typically, for families of words defined in terms of the
avoidability of a pattern, this growth is either polynomial or
exponential.  For instance, there are exponentially many ternary words of
length $n$ that avoid the pattern $xx$ and exponentially many binary
words of length $n$ that avoid the pattern $xxx$ \cite{Bra83}.
Similarly, there are exponentially many words over a $4$-letter alphabet
that avoid the pattern $xx$ in the abelian sense \cite{Car98}.  Indeed, the
vast majority of avoidable patterns lead to exponential growth.
Polynomial growth is rather rare:  The two known examples are binary
words avoiding overlaps \cite{RS84} and words over a $4$-letter alphabet
avoiding the pattern $abwbcxaybazac$ \cite{BMT89}.  It was therefore
quite natural for Du et al.\ to suppose that the growth
of binary words avoiding $xxx^R$ was either polynomial or
exponential.  However, we will show that in this case the growth is
intermediate between these two possibilities.  To our knowledge, this
is the first time such a growth rate has been shown in the context of
pattern avoidance.

Our main result is a ``structure theorem'' analogous to the well-known
result of Restivo and Salemi \cite{RS84} concerning binary
overlap-free words.  The existence of such a structure theorem was
conjectured by Shallit (personal communication) but he could not
precisely formulate it.  The result of Restivo and Salemi implies the
polynomial growth of binary overlap-free words.  In our case, the
structure theorem we obtain leads to an upper bound of the form
$Cn^{\lg n+c}$ for binary words avoiding $xxx^R$ (here $\lg n$ denotes
the base-$2$ logarithm of $n$).  We also
are able to establish a lower bound of the same type.  In
Table~\ref{exact_enum} we give an exact enumeration for small values
of $n$.

\begin{table}[h]
\begin{tabular}{||c|c||c|c||c|c||c|c||}
\hline
$n$ & $a_n$ & $n$ & $a_n$ & $n$ & $a_n$ & $n$ & $a_n$ \\
\hline
1 & 2 & 17 & 282 & 33 & 2018 & 49 & 8598 \\
2 & 4 & 18 & 324 & 34 & 2244 & 50 & 9266 \\
3 & 6 & 19 & 372 & 35 & 2490 & 51 & 9964 \\
4 & 10 & 20 & 426 & 36 & 2756 & 52 & 10708 \\
5 & 16 & 21 & 488 & 37 & 3044 & 53 & 11484 \\
6 & 24 & 22 & 556 & 38 & 3354 & 54 & 12300 \\
7 & 34 & 23 & 630 & 39 & 3690 & 55 & 13166 \\
8 & 48 & 24 & 712 & 40 & 4050 & 56 & 14062 \\
9 & 62 & 25 & 804 & 41 & 4438 & 57 & 15000 \\
10 & 80 & 26 & 908 & 42 & 4856 & 58 & 15974 \\
11 & 100 & 27 & 1024 & 43 & 5300 & 59 & 16994 \\
12 & 124 & 28 & 1152 & 44 & 5772 & 60 & 18076 \\
13 & 148 & 29 & 1296 & 45 & 6272 & 61 & 19206 \\
14 & 178 & 30 & 1454 & 46 & 6800 & 62 & 20388 \\
15 & 210 & 31 & 1626 & 47 & 7370 & 63 & 21632 \\
16 & 244 & 32 & 1814 & 48 & 7966 & 64 & 22924 \\
\hline
\end{tabular}
\caption{Number of binary words $a_n$ of length $n$ avoiding
  $xxx^R$}\label{exact_enum}
\end{table}

The sequence $(a_n)_{n \geq 1}$ is sequence A241903 of the On-Line
Encyclopedia of Integer Sequences \cite{oeis}.

\section{Blocks $L$ and $S$}

Define
\[
\mathcal{K} = \{z \in 0\{0,1\}^*1 : z \text{ avoids } xxx^R\}.
\]

Let the transduction $h:\{L,S\}^*\rightarrow\{0,1\}^*$ be defined for a sequence
$u=\prod_{i=0}^n u_i$, $u_i\in\{L,S\}$ by
$$h(u_i)=
\begin{cases}
00100&u_i=S\mbox{ and }i\mbox{ even}\\
11011&u_i=S\mbox{ and }i\mbox{ odd}\\
00100100&u_i=L\mbox{ and }i\mbox{ even}\\
11011011&u_i=L\mbox{ and }i\mbox{ odd}.\\
\end{cases}
$$

Then define 
\[
\mathcal{M} = \{u \in \{S,L\}^* : h(u) \text{ avoids } xxx^R\}.
\]

\begin{theorem}\label{binary to h} Let $z \in \mathcal{K}$.
Then there exists a constant $C$ such that $z$ can be written
\[
z = p h(u) s t
\]
where $|p|,|s| \leq C$, $u \in \mathcal{M}$, and $t \in
(\epsilon+1)(01)^*(\epsilon+1)$.
\end{theorem}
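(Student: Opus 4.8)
The plan is to analyze the fine structure of words avoiding $xxx^R$ through their run-length encoding, in the spirit of the Restivo--Salemi desubstitution argument for overlap-free words \cite{RS84}. The statement $u \in \mathcal{M}$ will in fact come for free once the decomposition $z = p\,h(u)\,s\,t$ is in hand, since $h(u)$ is then a factor of $z \in \mathcal{K}$ and avoidance of $xxx^R$ is inherited by factors; so the real content is the existence of the decomposition with $|p|,|s| \le C$ and $t \in (\epsilon+1)(01)^*(\epsilon+1)$.

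First I would pin down the admissible short factors. Taking $x$ to be a single letter shows that no word of $\mathcal{K}$ contains $000$ or $111$, so every maximal block of equal letters in $z$ has length $1$ or $2$, and $z$ is determined by its first letter together with its sequence of run lengths $r = r_1 r_2 \cdots \in \{1,2\}^*$. A finite computation (for instance, a check on the automaton recognizing the length-$N$ factors of $\mathcal{K}$ for a suitable constant $N$) then yields the complete list of admissible short factors. The two facts I want to extract are: (i) outside a bounded window, $z$ contains no strictly alternating factor of length $4$, equivalently $r$ contains no occurrence of $11$ once the word has left a long stretch of $(01)^\omega$; and (ii) wherever a length-$2$ run ($00$ or $11$) occurs, the surrounding letters are forced into a pattern of $00$- and $11$-runs separated by single opposite letters, i.e.\ locally into the image of $h$. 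Fact (i) is what pins the factor $t$ to a bounded translate of the end of $z$: every $h$-image avoids $0101$ and $1010$, so a maximal strictly alternating factor of $z$ that is not short cannot be re-attached to block structure on its right without creating a forbidden factor, and hence must be a suffix.

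With the local picture in hand I would carry out the desubstitution. Peel off from the right a maximal (possibly empty) strictly alternating suffix; by (i) what remains, $z'$, ends in a length-$2$ run up to a bounded suffix $s$. Inside $z'$ the length-$2$ runs act as synchronizing markers: by (ii) consecutive length-$2$ runs lie at bounded distance, and the segment between one such run and the next is one of finitely many ``half-block'' patterns, so $z'$ parses into the blocks $00100,\ 11011,\ 00100100,\ 11011011$ uniquely up to a bounded shift (which is absorbed into a bounded prefix $p$). Reading each $0$-heavy block $00100$ (resp.\ $00100100$) as the letter $S$ (resp.\ $L$) at an even index and each $1$-heavy block $11011$ (resp.\ $11011011$) as $S$ (resp.\ $L$) at an odd index gives a word $u \in \{S,L\}^*$ with $z' = p\,h(u)\,s$; the polarities must genuinely alternate along $u$ because two adjacent blocks of the same polarity would create $000$ or $111$, which is exactly the index-parity rule built into $h$, so the chosen reading is consistent. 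Thus $z = p\,h(u)\,s\,t$, and $u \in \mathcal{M}$ by the remark above.

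The main obstacle is the case analysis behind (i) and (ii): one must show not merely that the block structure is possible but that it is \emph{forced}, ruling out every alternative local configuration compatible with avoiding $000$ and $111$, and — most delicately — that the non-block debris at the two ends of $z$ (the prefix $p$ and the suffix $s$) has length bounded by an absolute constant rather than one growing with $|z|$. Establishing the uniqueness of the block parsing (the synchronization property) and verifying that a long interior strictly alternating segment genuinely forces a forbidden $xxx^R$ factor are the crux; both come down to finite, if somewhat lengthy, checks on factors of bounded length.
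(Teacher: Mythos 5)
Your plan is, in substance, the paper's own proof: the paper encodes $z$ over the syllables $01,011,001,0011$ (letters $a,b,c,d$ with $a\mapsto 0$, $b\mapsto 1$, $c\mapsto 00$, $d\mapsto 11$), which is exactly the run-length analysis you describe; its first step shows that alternating factors survive only in a length-two prefix or in the suffix $t$, and the remaining steps force the parsing into the blocks $00100$, $11011$, $00100100$, $11011011$ with bounded debris at both ends (and yes, $u\in\mathcal{M}$ is then automatic). So the strategy matches. The gap is that everything you defer to ``a finite computation'' is precisely the content of the theorem, and you neither carry it out nor state its conclusions in a form that is actually true. Fact (ii) is false as written: the word $001100110011$ avoids $xxx^R$ (all of its runs have length $2$, which blocks every candidate $x$), so a word of $\mathcal{K}$ can begin with $(0011)^3$, where length-$2$ runs occur that are not locally in an $h$-image. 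The block forcing holds only outside a bounded prefix, and the reason it is bounded is an explicit check, e.g.\ $f(cdcdcdc)=00110011001100$ contains $(0110)(0110)(0110)$. The same goes for fact (i): ``$h$-images avoid $0101$ and $1010$'' does not by itself force a long alternating factor to be a suffix; one needs the specific instances $101001=(10)(10)(01)$ and $010110=(01)(01)(10)$, and the interior exclusion of long $001001\cdots$ stretches needs $010010010=(010)(010)(010)$ and the image of $(cb)^3 0$. These explicit verifications, together with the bookkeeping that keeps the prefix and suffix debris bounded by an absolute constant through each stage (the paper tracks $2$, then $15$, then $31$ and $11$), are the proof; your sketch asserts their conclusions but does not discharge them.

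Two smaller points. First, your synchronization claim (``the segment between one length-$2$ run and the next is one of finitely many half-block patterns'') is again only true away from a bounded prefix, for the same reason as above, so the statement you would feed to the finite check must carry the boundary exception explicitly, otherwise the check refutes it. Second, reading $0$-heavy blocks as even-indexed and $1$-heavy blocks as odd-indexed letters of $u$ requires the parsed middle to begin with a $0$-heavy block for it to equal $h(u)$ literally; the paper arranges this by deleting up to five letters into $p$, and your ``bounded shift absorbed into $p$'' remark should be made precise in the same way. In short: right approach, correct identification of where the difficulty lies, but the case analysis that constitutes the theorem is left undone, and one of your two key working facts needs repair before any finite check could certify it.
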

\begin{proof}
 Word $z$ cannot contain 000 or 111 as a factor, so write $z=f(v)$ where $v\in\{ab,ad,cb,cd\}^*$, and
$$f:a\mapsto 0, b\mapsto 1,c\mapsto 00, d\mapsto 11.$$

Write $v=prs$ where $r$ is a maximal string of alternating $a$'s and $b$'s in $v$; thus $r$ lies in $(\epsilon+b)(ab)^*(\epsilon+a)$. If $|s|\ge 2$, then we claim that $|r|=1$ or $|pr|<3$. For suppose that $|r|\ge 2$, $|pr|\ge 3$ and $|s|\ge 2$. Let $s_1$, $s_2$  be the first two letters of $s$. Then $s_1$ must be $c$ or $d$; otherwise, $rs_1$ is an alternating string of $a$'s and $b$'s that is longer than $r$. Suppose $s_1=c$. (The other case is similar.) Since $|r|\ge 2$ and $|pr|\ge 3$, we conclude that $prs_1s_2$ has $yabcs_2$ as a suffix, some $y\in\{b,d\}$. But then $z$ contains a factor $f(yabcs_2)$, which has a factor  $1f(abc)1=101001=xxx^R$, where $x=10$. This is impossible.

If $ab$ or $ba$ is a factor of $v$, we can write $v=prs$ as above, with $|r|\ge 2$. This implies that $|s|\le 1$ or $|pr|\le 2$. If $|pr|\le 2$, then $p=\epsilon$, $|r|=2$, since $|r|\ge 2$; in this case $pr=ab$. If $s\le 1$, then, since $z$ ends in 1, either $s=\epsilon$ or $s=d$. In the first case, $ab$ is a suffix of $v$; in the second $ad$ is a suffix. It follows that every instance of $ab$ or $ba$ in $v$ either occurs in a prefix of length 2, or in a suffix of the form $(\epsilon +b)(ab)^*(\epsilon+ad)$.
The given suffix maps under $f$ to a suffix  $t \in(\epsilon+1)(01)^*(\epsilon+1)$ of $z$. We therefore can write $z=p_1z_1t$ such that $|p_1|\le 2$, and $z_1=f(v_1)$, for some $v_1\in\{ad,cb,cd\}^*$ where $ba$ is not a factor of $v_1$.

Write $v_1=prs$ where $r$ is a maximal string of alternating $c$'s and $d$'s in $v_1$. First of all, note that $|r|<7$; we check that $f(cdcdcdc)$ contains $xxx^R$ with $x=0d0$, and, symmetrically, $f(dcdcdcd)$ contains $xxx^R$ with $x=1c1$. We claim that $|r|<3$ or $|pr|<7$. For otherwise, suppose that $|r|\ge 3$, and $|p'r|= 7$, where $p'$ is a suffix of $p$. Assume that the first letter of $r$ is $c$. (The other case is similar.) Since $|r|<7$, $p'\ne\epsilon$. Since $r$ is maximal, the last letter of $p'$ is a $b$. If $|p'|=1$, then $f(p'r)=f(bcdcdcd)$, which contains $xxx^R$ with $x=1c1$; this is impossible. If $|p'|\ge 2$, then $cb$ is a suffix of $p'$ (since $ab$ is not a factor of $v_1$.) However, then $p'r$ contains the factor $cbcdc$, and $f(cbcdc)=001001100=xxx^R$, where $x=001$, so this is also impossible. It follows that every instance of $cdc$ or $dcd$ in $v_1$ occurs in a prefix of $v_1$ of length $6$. Removing a prefix $p'$ of length at most 7 from $v_1$ then gives a suffix $v_2$, such that the first letter of $v_2$ is $a$ or $c$, and neither of $cdc$ and $dcd$ is a factor of $v_2$. We can thus write $z=p_2z_2t$ where $z_2=f(v_2)$, $v_2\in\{ad,cb,cd\}^*$, words $ba$, $cdc$, $dcd$ are not factors of $v_2$, and $|p_2|\le |p_1|+|f(p')|\le 2 + 2(7)-1=15$. (Here, at most 6 letters of $p'$ can be $c$ or $d$, since $cdcdcdc$ and $dcdcdcd$ lead to instances of $xxx^R$.)

Suppose that $v'$ is any factor of $v_2$ of length 8. We claim that $v'$ contains one of $cd$ or $dc$ as a factor. Since $v'\notin\{a,b\}^*$, one of $c$ and $d$ is a factor of $v'$. Suppose then that $c$ is a factor of $v'$. (The other case is similar.) Suppose that neither of $cd$ nor $dc$ is a factor of $v'$. It follows that $v'$ is $bcbcbcbc$ or $cbcbcbcb$; each of these contains $cbcbcbc$, and $f(cbcbcbc)$ contains $010010010=xxx^R$ where $x=010$.

We may thus write $v_2=p'\left(\prod_{i=0}^n a_i\right)s'$, with $n\ge -1$, $|p'|,|s'|\le 7$, such that each $a_i$ begins and ends with $c$ or $d$, and neither of $cd$ or $dc$ is a factor of any $a_i$. By $n=-1$ we allow the possibility that the product term is empty. As a convention, we write the product as empty if $|v_2|_{cd}+|v_2|_{dc}\le 1$; for $i\ge 0$, then the last letter of $p'$ and the first letter of $s'$ are in $\{c,d\}$.
Suppose $n\ge 0$. Consider $a_i$, $i\ge 0$. Without loss of generality, let $a_i$ begin with $c$. The letter preceding $a_i$ is either the last letter of $a_{i-1}$, or the last letter of $p'$, and must be a $d$. We cannot have $|a_i|=1$, which would force $a_i=c$; word $a_i$ is then followed by the first letter of $a_{i+1}$ or of $s'$, which must be $d$. Then $dcd$ is a factor of $v_2$, which is impossible. Thus $|a_i|\ge 2$. Since $cd$ is not a factor of $a_i$, $a_i$ begins with $cb$. Since $a_i$ ends with $c$ or $d$ (not in $b$), $a_i\ne cb$, so that $|a_i|\ge 3$. Since $ba$ is not a factor of $v_2$, $a_i$ therefore begins with $cbc$. If $a_i\ne cbc$, then, since $cd$ is not a factor of $a_i$, word $a_i$ begins with $cbcb$, and arguing as previously, with $cbcbc$. If $cbcbc$ is a proper prefix of $a_i$, then $a_i$ begins with $cbcbcb$. However, $f(cb)^30$ contains an instance of $xxx^R$, so this is impossible: If $a_i$ begins with $c$, then $a_i\in\{cbc,cbcbc\}$. By the same reasoning, if $a_i$ begins with $d$, then $a_i\in\{dad,dadad\}$.

Let $v_3=(p')^{-1}v_2(s')^{-1}=\prod_{i=0}^n a_i$. Deleting up to the first 5 letters, if necessary, we assume that $a_0\in\{cbc,cbcbc\}$ (i.e., if $a_0$ begins with $dad$ or $dadad$, then delete these letters.) Then 
$z=p_3z_3s_3t$ where $z_3=f(v_3)$, $|p_3|\le |f(p')|+|p_2|+5\le 2(4)+3+15+5=31$, $|s_3|=|f(s')|\le 2(4)+3=11$. 
Here we use the fact that at most 4 of the letters of $p'$ or $s'$ can be in $\{c,d\}$; otherwise the pigeonhole principle would force an occurrence of $cd$ or $dc$ in one of these.

We can write $v_3$ in the form $g(u)$ where $u\in\{L,S\}^*$. Here write $u=\prod_{i=0}^m u_i$, each $u_i\in\{L,S\}$, and let $g$ be the transducer $$g(u_i)=
\begin{cases}
cbc&u_i=S\mbox{ and }i\mbox{ even}\\
dad&u_i=S\mbox{ and }i\mbox{ odd}\\
cbcbc&u_i=L\mbox{ and }i\mbox{ even}\\
dadad&u_i=L\mbox{ and }i\mbox{ odd}.\\
\end{cases}
$$
Thus $z_3$ has the form
$h(u)$ where $h$ is the transducer $$h(u_i)=
\begin{cases}
00100&u_i=S\mbox{ and }i\mbox{ even}\\
11011&u_i=S\mbox{ and }i\mbox{ odd}\\
00100100&u_i=L\mbox{ and }i\mbox{ even}\\
11011011&u_i=L\mbox{ and }i\mbox{ odd}.\\
\end{cases}
$$
We have thus proved the theorem with $C=\max(31,11)=31.$
\end{proof}

To study the growth rate of $\mathcal{K}$, it thus suffices to study the growth rate of $\mathcal{M}$.

The transducer $h$ is sensitive to the index of a word modulo 2; thus, suppose $r$, $s\in\{L,S\}^*$ and $r$ is a suffix of $s$. If $|r|$ and $|s|$ have the same parity, then $h(r)$ is a suffix of $h(s)$. However, if $|r|$ and $|s|$ have opposite parity, then $\overline{h(r)}$ is a suffix of $h(s)$.  (Here the overline indicates binary complementation.)

\section{Suitable pairs of words}

Let $\mathcal{S},\mathcal{L}\in\{S,L\}^*$. Say that the pair $\langle\mathcal{S},\mathcal{L}\rangle$ is {\bf suitable} if 

\begin{enumerate}
\item $|\mathcal{S}|$, $|\mathcal{L}|$ are odd.
\item There exist non-empty $\ell,\mu,p\in\{L,S\}^*$ such that
\begin{enumerate}
\item  $h(\mathcal{L})=\ell\ell^R$
\item  $h(\mathcal{S})=\ell\mu=\mu^R\ell^R$
\item $h(\mathcal{L})=\ell\mu\overline{\mu^R}p$

\end{enumerate}
\end{enumerate}

We see that $\langle S,L\rangle$ is suitable;
specifically, we could choose $\mu=0$, $\ell=0010$, $p=00$.\vspace{.1in}

Since $|\mathcal{S}|$, $|\mathcal{L}|$ are odd, the transducer $h$ is sensitive to the index of a word modulo 2, where lengths (and indices) are measured in terms of $\mathcal{S}$ and $\mathcal{L}$; i.e., if we use length function $||w||=|w|_\mathcal{S}+|w|_\mathcal{L}$; thus, suppose $r$, $s\in\{\mathcal{S},\mathcal{L}\}^*$ and $r$ is a suffix of $s$. If $||r||$ and $||s||$ have the same parity, then $h(r)$ is a suffix of $h(s)$. However, if $||r||$ and $||s||$ have opposite parity, then $\overline{h(r)}$ is a suffix of $h(s)$.

\begin{lemma}\label{properties of h}
Let $\mathcal{S},\mathcal{L}\in\{S,L\}^*$. Suppose that $\langle\mathcal{S},\mathcal{L}\rangle$ is suitable.
\begin{enumerate}
\item Word $h(\mathcal{L})p^{-1}$ is a prefix of $h(\mathcal{S}\mathcal{S})$.
\item Word $h(\mathcal{S})$ is  both a prefix and suffix of $h(\mathcal{L})$.
\end{enumerate}
\end{lemma}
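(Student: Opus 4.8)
The plan is to reduce the statement to the three defining identities (a), (b), (c) of suitability, together with one structural observation about the transducer $h$: doubling a block of odd length flips the parity of the second copy.

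The first step I would carry out is to record the identity
\[
h(\mathcal{S}\mathcal{S}) = h(\mathcal{S})\,\overline{h(\mathcal{S})}.
\]
Regard $\mathcal{S}\mathcal{S}$ as a word in $\{S,L\}^*$. Its first $|\mathcal{S}|$ letters carry exactly the same indices as in $\mathcal{S}$ alone, hence map under $h$ to $h(\mathcal{S})$. The second copy of $\mathcal{S}$ is a suffix of $\mathcal{S}\mathcal{S}$ whose length $|\mathcal{S}|$ has parity opposite to $|\mathcal{S}\mathcal{S}|=2|\mathcal{S}|$, since $|\mathcal{S}|$ is odd; so by the parity-sensitivity remark it contributes $\overline{h(\mathcal{S})}$. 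This gives the displayed identity.

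For part (1) I would then substitute the two expressions for $h(\mathcal{S})$ supplied by (b): from $h(\mathcal{S})=\mu^R\ell^R$ we get $\overline{h(\mathcal{S})}=\overline{\mu^R}\,\overline{\ell^R}$, and from $h(\mathcal{S})=\ell\mu$ we get
\[
h(\mathcal{S}\mathcal{S}) = \ell\mu\,\overline{\mu^R}\,\overline{\ell^R}.
\]
On the other hand, (c) gives $h(\mathcal{L})p^{-1}=\ell\mu\,\overline{\mu^R}$, which is visibly a prefix of the word above (the deleted tail being $\overline{\ell^R}$, nonempty since $\ell$ is nonempty), proving (1). For part (2), the prefix claim is immediate from (c) and (b): $h(\mathcal{L})=\ell\mu\,\overline{\mu^R}p = h(\mathcal{S})\,(\overline{\mu^R}p)$. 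For the suffix claim I would invoke palindromicity: identity (a) says $h(\mathcal{L})=\ell\ell^R$, which equals its own reversal, and the two expressions for $h(\mathcal{S})$ in (b) give $(\ell\mu)^R=\mu^R\ell^R=\ell\mu$, so $h(\mathcal{S})$ is a palindrome as well. Writing $h(\mathcal{L})=h(\mathcal{S})w$ and reversing, $h(\mathcal{L})=h(\mathcal{L})^R=w^R h(\mathcal{S})^R=w^R h(\mathcal{S})$, so $h(\mathcal{S})$ is also a suffix of $h(\mathcal{L})$.

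The only point demanding any care is the first step: correctly identifying the parity of the trailing copy of $\mathcal{S}$ inside $\mathcal{S}\mathcal{S}$, and hence the complementation $\overline{h(\mathcal{S})}$; once that identity is in hand, the rest is a short substitution exercise using (a)--(c), and I anticipate no real obstacle.
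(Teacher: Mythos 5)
Your proof is correct and takes essentially the same route as the paper: part (1) follows from identity (c) once one computes $h(\mathcal{S}\mathcal{S})=h(\mathcal{S})\,\overline{h(\mathcal{S})}=\ell\mu\overline{\mu^R}\,\overline{\ell^R}$ via the parity-sensitivity of $h$, and part (2) follows by reversing the decomposition $h(\mathcal{L})=\ell\mu\overline{\mu^R}p$ using the palindromicity supplied by (a) and (b). You simply spell out the steps the paper compresses into ``immediate from 2(c)'' and its one-line chain of equalities.
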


\begin{proof}
The first of these properties is immediate from property 2(c) of the definition of suitability. For the second, we see that
$h(\mathcal{L})=\ell\mu\overline{\mu^R}p=\mu^R\ell^R\overline{\mu^R}p=p^R\overline{\mu}\ell\mu$.
\end{proof}

Now suppose that $\mathcal{S}$ and $\mathcal{L}$ are fixed and $\langle\mathcal{S},\mathcal{L}\rangle$ is suitable. 
Define morphism $\Phi:\{\mathcal{S},\mathcal{L}\}^*\rightarrow \{\mathcal{S},\mathcal{L}\}^*$ by
$\Phi(\mathcal{S})=\mathcal{S}\mathcal{L}$, $\Phi(\mathcal{L})=\mathcal{S}\mathcal{L}\mathcal{L}$. 

Morphism $\Phi$ is conjugate to the square of the Fibonacci morphism
$D$, where $D(\mathcal{L})=\mathcal{LS}$,
$D(\mathcal{S})=\mathcal{L}$; namely,
$\Phi=\mathcal{L}^{-1}D^2\mathcal{L}$. This implies that
$||\Phi^k(\mathcal{S})||=\mathcal{F}_{2k}$,
$||\Phi^k(\mathcal{L})||=\mathcal{F}_{2k+1}$, where $\mathcal{F}_k$ is
the $k$th Fibonacci number, counting from
$\mathcal{F}_0=\mathcal{F}_1=1$ (we choose this indexing of the
Fibonacci numbers for convenience: in particular, so that
$||\Phi^0(\mathcal{S})||=\mathcal{F}_{0}=1$).

\begin{lemma}\label{properties of h and phi}
Let $\beta\in\{\mathcal{S},\mathcal{L}\}^*$. Then
\begin{enumerate}
\item $h(\Phi(\mathcal{S}\beta))$ is a prefix of $h(\Phi(\mathcal{L}\beta))$ and  $h(\Phi^2(\mathcal{S}\beta))$ is a prefix of $h(\Phi^2(\mathcal{L}\beta))$. 
\item $\overline{h(\Phi(\mathcal{S}\beta))}$ is a suffix of $h(\Phi(\mathcal{L}\beta)).$
\item $\overline{h(\Phi^2(\mathcal{S}\beta))}$ is a suffix of $h(\Phi^2(\mathcal{L}\beta)).$
\item $h(\Phi(\mathcal{L}))p^{-1}$ is a prefix of $h(\Phi(\mathcal{S}\mathcal{S}))$.
\item $h(\Phi^2(\mathcal{L}))(\overline{p})^{-1}$ is a prefix of $h(\Phi^2(\mathcal{S}\mathcal{S}))$.
\end{enumerate}

\end{lemma}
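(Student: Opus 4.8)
The plan is to reduce all five assertions to the suitability relations 2(a)--(c) and to Lemma~\ref{properties of h}, after first writing out the block factorizations of $h\circ\Phi$ and $h\circ\Phi^2$ on the generators. Using $\Phi(\mathcal{S})=\mathcal{SL}$, $\Phi(\mathcal{L})=\mathcal{SLL}$, $\Phi^2(\mathcal{S})=\mathcal{SLSLL}$, $\Phi^2(\mathcal{L})=\mathcal{SLSLLSLL}$, and the fact (noted just before the lemma, valid because $|\mathcal{S}|,|\mathcal{L}|$ are odd) that $h$ complements a block of an $\{\mathcal{S},\mathcal{L}\}^*$-word exactly when an odd number of generators precede it, I would record $h(\Phi(\mathcal{S}))=h(\mathcal{S})\overline{h(\mathcal{L})}$, $h(\Phi(\mathcal{L}))=h(\Phi(\mathcal{S}))\,h(\mathcal{L})$, $h(\Phi^2(\mathcal{S}))=h(\Phi(\mathcal{S}))\,h(\Phi(\mathcal{L}))$, $h(\Phi^2(\mathcal{L}))=h(\Phi^2(\mathcal{S}))\,\overline{h(\Phi(\mathcal{L}))}$, together with the parities $||\Phi(\mathcal{S})||=2$, $||\Phi(\mathcal{L})||=3$, $||\Phi^2(\mathcal{S})||=5$, $||\Phi^2(\mathcal{L})||=8$. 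I would also note that $h(\mathcal{S})=\ell\mu$ and $h(\mathcal{L})=\ell\ell^R$ are palindromes (immediate from 2(a),(b)), and that, comparing short prefixes in $\ell\mu=\mu^R\ell^R$, the word $\ell$ begins with $\mu^R$ and $\ell^R$ begins with $\mu$.

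For the suffix statements (2) and (3): since $||\Phi(\mathcal{S})||$ is even and $||\Phi(\mathcal{L})||$ is odd, $h(\Phi(\mathcal{S}\beta))=h(\Phi(\mathcal{S}))\,h(\Phi(\beta))$ while $h(\Phi(\mathcal{L}\beta))=h(\Phi(\mathcal{L}))\,\overline{h(\Phi(\beta))}$, so $\overline{h(\Phi(\mathcal{S}\beta))}=\overline{h(\Phi(\mathcal{S}))}\,\overline{h(\Phi(\beta))}$ and the common tail $\overline{h(\Phi(\beta))}$ cancels, reducing (2) to the single $\beta$-free claim that $\overline{h(\Phi(\mathcal{S}))}$ is a suffix of $h(\Phi(\mathcal{L}))$. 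Since $h(\Phi(\mathcal{L}))=h(\Phi(\mathcal{S}))\,h(\mathcal{L})$ and, by Lemma~\ref{properties of h}(2), $h(\mathcal{S})$ is a suffix of $h(\mathcal{L})$, this follows on comparing the last $|h(\Phi(\mathcal{S}))|$ letters of $h(\Phi(\mathcal{L}))$ with $\overline{h(\Phi(\mathcal{S}))}=\overline{h(\mathcal{S})}\,h(\mathcal{L})$, using palindromicity of $h(\mathcal{L})$. Statement (3) is obtained the same way one level up: the parities of $\Phi^2(\mathcal{S})$ (odd) and $\Phi^2(\mathcal{L})$ (even) again let the $h(\Phi^2(\beta))$-tail cancel, and what remains, namely that $\overline{h(\Phi^2(\mathcal{S}))}$ is a suffix of $h(\Phi^2(\mathcal{L}))$, reduces through $h(\Phi^2(\mathcal{L}))=h(\Phi^2(\mathcal{S}))\,\overline{h(\Phi(\mathcal{L}))}$ to the statement already established for (2).

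For the prefix statements (1), (4), (5): here I would peel the common prefix $h(\Phi(\mathcal{S}))$ (or $h(\Phi^2(\mathcal{S}))$) from both sides using the block identities. Statement (4) becomes, after cancelling $h(\Phi(\mathcal{S}))$ from $h(\Phi(\mathcal{SS}))=h(\Phi(\mathcal{S}))^2$, that $h(\mathcal{L})p^{-1}$ is a prefix of $h(\Phi(\mathcal{S}))=h(\mathcal{S})\overline{h(\mathcal{L})}$; since $h(\mathcal{L})p^{-1}=\ell\mu\,\overline{\mu^R}$ by 2(c) and $h(\mathcal{S})=\ell\mu$, this is just the fact that $\ell$ begins with $\mu^R$. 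Statement (5) is the $\Phi^2$ analogue: from $h(\Phi^2(\mathcal{SS}))=h(\Phi^2(\mathcal{S}))\,\overline{h(\Phi^2(\mathcal{S}))}$ and $h(\Phi^2(\mathcal{L}))=h(\Phi^2(\mathcal{S}))\,\overline{h(\Phi(\mathcal{L}))}$, cancelling and taking complements reduces it to $h(\Phi(\mathcal{L}))p^{-1}$ being a prefix of $h(\Phi^2(\mathcal{S}))=h(\Phi(\mathcal{S}))\,h(\Phi(\mathcal{L}))$, which unwinds through the $\Phi$-identities and 2(a)--(c) to the same kind of elementary containment. For (1), peeling $h(\Phi(\mathcal{S}))$ leaves a comparison of $h(\Phi(\beta))$ with $h(\mathcal{L})\,\overline{h(\Phi(\beta))}$ (and, for the $\Phi^2$ half, after also peeling $h(\Phi^2(\mathcal{S}))$ and complementing, of $h(\Phi^2(\beta))$ with $h(\Phi(\mathcal{L}))\,\overline{h(\Phi^2(\beta))}$); unlike (2) the $\beta$-dependence does not cancel, so I would induct on $|\beta|$, writing $\beta=X\gamma$ with $X\in\{\mathcal{S},\mathcal{L}\}$, expanding $h(\Phi(\beta))$ as $h(\Phi(X))\,h(\Phi(\gamma))$ or its complement, and using the block shape of $h(\Phi(X))$ together with the suitability relations and Lemma~\ref{properties of h} to absorb the leading $h(\mathcal{L})$ and carry the induction through.

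The hard part is the complementation bookkeeping. Because $||\Phi(\mathcal{S})||$ and $||\Phi(\mathcal{L})||$ have opposite parity, replacing a leading $\mathcal{S}$ by $\mathcal{L}$ flips whether every later block is complemented; and this is reversed at level $\Phi^2$, where $||\Phi^2(\mathcal{S})||$ is odd while $||\Phi^2(\mathcal{L})||$ is even. Keeping these flips straight, and arranging each residual prefix/suffix comparison to be exactly one that the suitability relations 2(a)--(c) (together with the palindromicity of $h(\mathcal{S})$, $h(\mathcal{L})$, and Lemma~\ref{properties of h}) is built to supply, is the bulk of the work; the inductive step in (1) is the most delicate point, since it is the one place where the $\beta$-part cannot simply be cancelled.
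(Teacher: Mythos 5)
Your handling of (2)--(5) is essentially correct, and in places tighter than the paper's own write-up: the block identities $h(\Phi(\mathcal{S}))=h(\mathcal{S})\overline{h(\mathcal{L})}$, $h(\Phi(\mathcal{L}))=h(\Phi(\mathcal{S}))h(\mathcal{L})$, $h(\Phi^2(\mathcal{S}))=h(\Phi(\mathcal{S}))h(\Phi(\mathcal{L}))$, $h(\Phi^2(\mathcal{L}))=h(\Phi^2(\mathcal{S}))\overline{h(\Phi(\mathcal{L}))}$ are right, the parity bookkeeping is right, and your reduction of (2)--(3) by cancelling the common tail down to ``$\overline{h(\Phi(\mathcal{S}))}$ is a suffix of $h(\Phi(\mathcal{L}))$'', and then to $h(\mathcal{S})$ being a suffix of $h(\mathcal{L})$ (Lemma~\ref{properties of h}), is cleaner than the paper's appeal to ``$\mathcal{S}$ is a suffix of $\mathcal{L}$'' (which is not literally true even for the base pair $\langle S,L\rangle$). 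Your (4) and (5), reducing to $h(\mathcal{L})p^{-1}=\ell\mu\overline{\mu^R}$ being a prefix of $h(\mathcal{S})\overline{h(\mathcal{L})}$, i.e.\ to $\mu^R$ being a prefix of $\ell$, also check out and match the paper's use of Lemma~\ref{properties of h} in substance.

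The genuine gap is item (1). You read it literally, with the distinguished letter in front, correctly observe that the $\beta$-part does not cancel, and then defer everything to an induction on $|\beta|$ that you never carry out --- and that induction cannot be carried out, because the literal statement is false. Take the suitable pair $\langle S,L\rangle$ and $\beta=\mathcal{S}$: then $h(\Phi(\mathcal{S}\mathcal{S}))=h(SLSL)=00100\,11011011\,00100\,11011011$, while $h(\Phi(\mathcal{L}\mathcal{S}))=h(SLLSL)=00100\,11011011\,00100100\,11011\,00100100$, and these disagree at the twentieth letter, so the former is not a prefix of the latter; equivalently, your residual claim that $h(\Phi(\beta))$ is a prefix of $h(\mathcal{L})\overline{h(\Phi(\beta))}$ already fails for $\beta=\mathcal{S}$. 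The statement actually needed (and the one used in Lemma~\ref{B-lemma}, e.g.\ ``$h(\Phi(\mathcal{S}\mathcal{S}\mathcal{S}\mathcal{S}))$ is a prefix of $h(\Phi(\mathcal{S}\mathcal{S}\mathcal{S}\mathcal{L}))$'') puts the distinguished letter at the end: $h(\Phi(\beta\mathcal{S}))$ is a prefix of $h(\Phi(\beta\mathcal{L}))$, and similarly for $\Phi^2$. In that form the claim is immediate and needs no induction: $\Phi(\mathcal{S})$ is a prefix of $\Phi(\mathcal{L})$ (and $\Phi^2(\mathcal{S})$ of $\Phi^2(\mathcal{L})$), hence $\Phi(\beta\mathcal{S})$ is a prefix of $\Phi(\beta\mathcal{L})$, and $h$, being a left-to-right transducer whose state depends only on how many letters it has read, maps prefixes to prefixes --- which is exactly the paper's one-line argument (the paper's statement and proof of (1) contain the same left/right slip, so the correct move is to repair the statement, not to prove the misprinted one).
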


\begin{proof}
Since $\Phi(\mathcal{S})$ is a prefix of $\Phi(\mathcal{L})$, $\Phi(\mathcal{S}\beta)$ is a prefix of $\Phi(\mathcal{L}\beta)$, so that $h(\Phi(\mathcal{S}\beta))$ is a prefix of $h(\Phi(\mathcal{L}\beta))$. Similarly,
 $h(\Phi^2(\mathcal{S}\beta))$ is a prefix of $h(\Phi^2(\mathcal{L}\beta))$, establishing (1).

Since $\mathcal{S}$ is a suffix of $\mathcal{L}$, we see that $\Phi(\mathcal{S})$ is a suffix of 
$\Phi(\mathcal{L})$. Because
 $|\Phi(\mathcal{L})|$ is odd, while $|\Phi(\mathcal{S})|$ is even,
it follows that $\overline{h(\Phi(\mathcal{S}))}$ is a suffix of $h(\Phi(\mathcal{L})).$ More generally,
if $\beta\in\{\mathcal{S},\mathcal{L}\}^*$, $\overline{h(\Phi(\mathcal{S}\beta))}$ is a suffix of $h(\Phi(\mathcal{L}\beta))$, establishing (2). The proof of (3) is similar.

For (4),
$h(\Phi(\mathcal{L}))p^{-1}=h(\mathcal{S}\mathcal{L}\mathcal{L})p^{-1}
=h(\mathcal{S}\mathcal{L})h(\mathcal{L})p^{-1}$, which is a prefix of
$h(\mathcal{S}\mathcal{L})h(\mathcal{S}\mathcal{S}),$ which is in turn a prefix of
$h(\mathcal{S}\mathcal{L})h(\mathcal{S}\mathcal{L})=h(\Phi(\mathcal{SS}))$.

For (5),
$h(\Phi^2(\mathcal{L}))(\overline{p})^{-1}=h(\Phi(\mathcal{S}\mathcal{L})\Phi(\mathcal{L}))(\overline{p})^{-1}
=h(\Phi(\mathcal{S}\mathcal{L}))\overline{h(\Phi(\mathcal{L}))p^{-1}}$ (since $|\Phi(\mathcal{S}\mathcal{L})|$ is odd), which is a prefix of $
h(\Phi(\mathcal{S}\mathcal{L}))\overline{h(\Phi(\mathcal{SS}))}=h(\Phi(\mathcal{S}\mathcal{L})\Phi(\mathcal{SS}))$, which is in turn a prefix of
$h(\Phi(\mathcal{S}\mathcal{L}\mathcal{SSL})),=h(\Phi^2(\mathcal{SS}))$.

\end{proof}

Define the set $\mathcal{B}\subseteq \{ \mathcal{S},\mathcal{L}\}^*$:

\begin{eqnarray*}
\mathcal{B}&=&(\mathcal{S}+\mathcal{L})\mathcal{S}\mathcal{S}\mathcal{S}\mathcal{L}(\mathcal{L}+\mathcal{S}\mathcal{S}+\mathcal{S}\mathcal{L})
\cup 
\mathcal{L}\mathcal{S}\mathcal{S}\mathcal{L}(\mathcal{L}+\mathcal{S}\mathcal{S}+\mathcal{S}\mathcal{L})
\cup(\mathcal{S}+\mathcal{L})\mathcal{L}\mathcal{L}\mathcal{L}\mathcal{L}\mathcal{L}(\mathcal{S}+\mathcal{L})\\
&&
\cup(\mathcal{S}+\mathcal{L})\mathcal{L}\mathcal{S}\mathcal{L}\mathcal{L}\mathcal{L}(\mathcal{S}+\mathcal{L})\\&&
\cup\Phi((\mathcal{S}+\mathcal{L})\mathcal{S}\mathcal{S}(\mathcal{S}+\mathcal{L}))
\cup\Phi((\mathcal{S}+\mathcal{L})
\mathcal{L}\mathcal{L}\mathcal{L}\mathcal{S}\mathcal{L}(\mathcal{L}+\mathcal{S}\mathcal{S}+\mathcal{S}\mathcal{L}))\\
&&
\cup\Phi^2(\mathcal{L}\mathcal{L}\mathcal{L}(\mathcal{S}+\mathcal{L}))
\cup\Phi^2((\mathcal{S}+\mathcal{L})\mathcal{L}\mathcal{S}\mathcal{S}(\mathcal{S}+\mathcal{L}))
\cup\Phi^2((\mathcal{S}+\mathcal{L})\mathcal{S}\mathcal{S}\mathcal{S}\mathcal{S}\mathcal{S}(\mathcal{S}+\mathcal{L}))
\end{eqnarray*}

\begin{lemma}\label{B-lemma}
Let $u\in\mathcal{M}$. Then no word of $\mathcal{B}$ is a factor of $u$.
\end{lemma}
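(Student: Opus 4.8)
The plan is to prove the contrapositive, case by case over the members of $\mathcal{B}$: if some $w\in\mathcal{B}$ occurs as a factor of $u$, then $h(u)$ contains a factor of the form $xxx^R$, contradicting $u\in\mathcal{M}$. First observe that $\mathcal{B}$ is a finite union, and that each summand is a product of the form (prefix choice)$\cdot$(fixed word)$\cdot$(suffix choice), optionally wrapped in $\Phi$ or $\Phi^2$; so there are only finitely many words $w$ to examine. Next observe that it is enough to exhibit an instance of $xxx^R$ inside $h(w)$ when $h$ is evaluated with the first block of $w$ at an even index: if $w$ sits at an arbitrary position inside $u$, the block of $h(u)$ it produces is either $h(w)$ or $\overline{h(w)}$ (shifting a starting index by one parity complements $h$ uniformly, since the odd maps $11011$, $11011011$ are the complements of the even maps $00100$, $00100100$), and being an instance of $xxx^R$ is preserved under complementation, as $\overline{vvv^R}=\overline{v}\,\overline{v}\,\overline{v^R}$ is again of that shape. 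So the lemma reduces to a finite list of checks, each of the form ``$h(w)$ contains $xxx^R$.''

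The tools for these checks are the identities packaged into suitability. From $h(\mathcal{S})=\ell\mu=\mu^R\ell^R$ we get that $h(\mathcal{S})$ is a palindrome, and from $h(\mathcal{L})=\ell\ell^R$ that $h(\mathcal{L})$ is too; moreover $h(\mathcal{L})=\ell\mu\,\overline{\mu^R}\,p=p^R\,\overline{\mu}\,\ell\mu$, exactly as in the proof of Lemma~\ref{properties of h}. Together with Lemma~\ref{properties of h}(1),(2) ($h(\mathcal{S})$ is a prefix and a suffix of $h(\mathcal{L})$, and $h(\mathcal{L})p^{-1}$ is a prefix of $h(\mathcal{S}\mathcal{S})$) and the $\Phi$- and $\Phi^2$-lifted versions in Lemma~\ref{properties of h and phi}, every image $h(w)$ with $w\in\mathcal{B}$ can be written out explicitly as a concatenation of the fixed blocks $\ell,\mu,p$ together with their reversals and complements, the parity of each block being tracked along the concatenation. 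For the four $\Phi$- and $\Phi^2$-families the unfolding is carried out first by rewriting $h(\Phi(\mathcal{S}\beta))$, $h(\Phi(\mathcal{L}\beta))$, $h(\Phi^2(\mathcal{S}\beta))$, $h(\Phi^2(\mathcal{L}\beta))$, $h(\Phi(\mathcal{L}))p^{-1}$, and $h(\Phi^2(\mathcal{L}))(\overline{p})^{-1}$ by means of parts (1)–(5) of Lemma~\ref{properties of h and phi}, reducing them to $h$ of shorter $\{\mathcal{S},\mathcal{L}\}$-words and hence, again, to $\ell,\mu,p$.

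With $h(w)$ so expressed, in each case one points to a concrete window that is an instance of $xxx^R$, and the shape of $\mathcal{B}$ is precisely what makes this possible. The ``core'' of each summand (for instance $\mathcal{S}\mathcal{S}\mathcal{S}\mathcal{L}$, or $\mathcal{L}\mathcal{S}\mathcal{L}\mathcal{L}\mathcal{L}$, or $\Phi(\mathcal{S}\mathcal{S})$, or the relevant factors of the $\Phi^2$-families) is chosen so that the relevant window of $h$ of that core collapses — using $h(\mathcal{S})=\ell\mu=\mu^R\ell^R$ and $h(\mathcal{L})=\ell\ell^R=\ell\mu\,\overline{\mu^R}p$ — to a word of the form $vvv^R$ up to a bounded prefix and suffix; the leading factor $(\mathcal{S}+\mathcal{L})$ and the trailing factors such as $(\mathcal{L}+\mathcal{S}\mathcal{S}+\mathcal{S}\mathcal{L})$ or $(\mathcal{S}+\mathcal{L})$ are there only to supply, via the prefix/suffix containments of Lemma~\ref{properties of h}(2) and their $\Phi$-lifts, enough extra letters on either side so that the pattern $vvv^R$ is genuinely completed no matter how $w$ is embedded in $u$ and no matter which letter of $u$ precedes or follows it. Performing this once for each of the finitely many $w$ finishes the proof.

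The main obstacle is bookkeeping rather than ideas. The two $\Phi^2$-families require unfolding $h\circ\Phi^2$ twice, and at each level a complementation may appear (by Lemma~\ref{properties of h and phi}(2),(3),(5)), so the chief source of error is keeping the parities — hence the overlines — straight through the double unfolding. A secondary point demanding care is confirming that the prefix/suffix alternatives listed in each summand of $\mathcal{B}$ really do furnish the extra letters needed in every embedding; this is where the ``prefix and suffix of $h(\mathcal{L})$'' statement and its $\Phi$-lifted analogues do their work. Note that here we only need \emph{soundness} of $\mathcal{B}$ — that each of its words is indeed forbidden in any $u\in\mathcal{M}$ — and not completeness, so there is no obligation in this proof to argue that $\mathcal{B}$ accounts for every forbidden factor.
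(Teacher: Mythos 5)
Your overall strategy is the same as the paper's: pass to the contrapositive, observe that it suffices to show $h(b)$ contains a non-empty $xxx^R$ for each $b\in\mathcal{B}$ (your parity remark is correct and is the right justification -- since $|\mathcal{S}|,|\mathcal{L}|$ are odd, block parity agrees with letter parity, an embedded occurrence of $b$ contributes $h(b)$ or $\overline{h(b)}$ to $h(u)$, and $\overline{xxx^R}$ is again of the form $yyy^R$), use Lemma~\ref{properties of h} and Lemma~\ref{properties of h and phi} to collapse the alternatives such as $(\mathcal{S}+\mathcal{L})$ and $(\mathcal{L}+\mathcal{S}\mathcal{S}+\mathcal{S}\mathcal{L})$ in each summand to a single representative, and expand that representative in $\ell$, $\mu$, $p$ via the suitability identities $h(\mathcal{S})=\ell\mu=\mu^R\ell^R$ and $h(\mathcal{L})=\ell\ell^R=\ell\mu\overline{\mu^R}p=p^R\overline{\mu}\ell\mu$. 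This is exactly the skeleton of the paper's proof.

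The genuine gap is that the proposal stops where the actual proof begins. The entire content of the lemma is the nine explicit verifications: for instance that $h(\mathcal{S}\mathcal{S}\mathcal{S}\mathcal{S}\mathcal{L}\mathcal{L})(\overline{p})^{-1}$ contains $xxx^R$ with $x=\mu\overline{\mu^R\ell^R}\ell$, that $h(\mathcal{S}\mathcal{L}^5\mathcal{S})$ contains it with $x=\ell^R\overline{\ell\ell^R}\ell$, and the much longer parity-tracked expansions in the $\Phi$- and $\Phi^2$-cases, where $x$ is a word such as $\overline{\ell^R}\ell\mu\overline{\ell\ell^R}\ell\ell^R\overline{\ell\mu}\ell\ell^R\overline{\ell\ell^R}\ell\mu\overline{\ell}$. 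You never exhibit a single such $x$, nor the expansions from which one would read it off; you only assert that each chosen ``core'' collapses to $vvv^R$ up to bounded ends. But that assertion is precisely the statement to be proved, and there is no a priori reason the particular words placed in $\mathcal{B}$ have this property -- $\mathcal{B}$ was engineered around these computations, not the reverse, and each case needs its own careful bookkeeping of reversals and complements (including which truncation, $p^{-1}$ versus $(\overline{p})^{-1}$, is the right one, and why one representative word covers every alternative of its summand via the prefix/suffix containments). As written this is a correct plan with the finite list of checks -- which is all the lemma amounts to -- left undone, so it does not yet constitute a proof.
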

\begin{proof}
It suffices to show that for each word $b\in\mathcal{B}$, $h(b)$ contains a non-empty factor $xxx^R$.
$\mathcal{B}$ is written as a union, and we make cases based on which piece of the union $b$ belongs:
\vspace{.1in}

\noindent $b\in (\mathcal{S}+\mathcal{L})\mathcal{S}\mathcal{S}\mathcal{S}\mathcal{L}(\mathcal{L}+\mathcal{S}\mathcal{S}+\mathcal{S}\mathcal{L})$: In this case, it suffices to show that
 $h(\mathcal{S}\mathcal{S}\mathcal{S}\mathcal{S}\mathcal{L}\mathcal{L})(\bar{p})^{-1}$ contains a non-empty factor $xxx^R$, because of the results of Lemma~\ref{properties of h}. In particular, $h(\mathcal{S}\mathcal{S}\mathcal{S}\mathcal{S}\mathcal{L}\mathcal{L})(\bar{p})^{-1}$ is a suffix of
$h(\mathcal{L}\mathcal{S}\mathcal{S}\mathcal{S}\mathcal{L}\mathcal{L})(\bar{p})^{-1}$, which is a prefix of 
$h(\mathcal{L}\mathcal{S}\mathcal{S}\mathcal{S}\mathcal{L}\mathcal{S}\mathcal{S})$, which is a prefix of 
$h(\mathcal{L}\mathcal{S}\mathcal{S}\mathcal{S}\mathcal{L}\mathcal{S}\mathcal{L})$. Again,
$h(\mathcal{S}\mathcal{S}\mathcal{S}\mathcal{S}\mathcal{L}\mathcal{L})(\bar{p})^{-1}$ is a prefix of 
$h(\mathcal{S}\mathcal{S}\mathcal{S}\mathcal{S}\mathcal{L}\mathcal{S}\mathcal{S})$, which is a prefix of 
$h(\mathcal{S}\mathcal{S}\mathcal{S}\mathcal{S}\mathcal{L}\mathcal{S}\mathcal{L})$. Now
\begin{eqnarray*}
&&h(\mathcal{S}\mathcal{S}\mathcal{S}\mathcal{S}\mathcal{L}\mathcal{L})(\bar{p})^{-1}\\
&=&(\ell\mu)(\overline{\mu^R\ell^R})(\ell\mu)(\overline{\mu^R\ell^R})(\ell\ell^R)(\overline{\ell\mu\overline{\mu^R}})\\
&=&\ell\mu\overline{\mu^R\ell^R}\ell\mu\overline{\mu^R\ell^R}\ell\ell^R\overline{\ell\mu\overline{\mu^R}}\\
&=&\ell\hspace{.1in}\mu\overline{\mu^R\ell^R}\ell\hspace{.1in}\mu\overline{\mu^R\ell^R}\ell
\hspace{.1in}\ell^R\overline{\ell\mu}\mu^R\hspace{.1in}
\end{eqnarray*}
which contains an instance of $xxx^R$ with $x=\mu\overline{\mu^R\ell^R}\ell$. \vspace{.1in}

\noindent $b\in \mathcal{L}\mathcal{S}\mathcal{S}\mathcal{L}(\mathcal{L}+\mathcal{S}\mathcal{S}+\mathcal{S}\mathcal{L})
$: In this case, it suffices to show that
 $h(\mathcal{L}\mathcal{S}\mathcal{S}\mathcal{L}\mathcal{L})p^{-1}$ contains a non-empty factor $xxx^R$, because of the results of Lemma~\ref{properties of h}. But
\begin{eqnarray*}
&&h(\mathcal{L}\mathcal{S}\mathcal{S}\mathcal{L}\mathcal{L})p^{-1}\\
&=&(p\overline{\mu}\mu^R\ell^R)(\overline{\ell\mu})(\mu^R\ell^R)(\overline{\ell\ell^R})(\ell\mu\overline{\mu^R})\\
&=&p\overline{\mu}\mu^R\ell^R\overline{\ell\mu}\mu^R\ell^R\overline{\ell\ell^R}\ell\mu\overline{\mu^R}\\
&=&p\hspace{.1in}\overline{\mu}\mu^R\ell^R\overline{\ell}\hspace{.1in}
\overline{\mu}\mu^R\ell^R\overline{\ell}\hspace{.1in}\overline{\ell^R}\ell\mu\overline{\mu^R}\hspace{.1in}
\end{eqnarray*}
which contains the instance $xxx^R$ with $x=\overline{\mu}\mu^R\ell^R\overline{\ell}$.\vspace{.1in}

\noindent $b\in (\mathcal{S}+\mathcal{L)})\mathcal{L}^5 (\mathcal{S}+\mathcal{L)})$: In this case, it suffices to show that
 $h(\mathcal{S}\mathcal{L}^5\mathcal{S})$ contains a non-empty factor $xxx^R$, because of the results of Lemma~\ref{properties of h}. But
\begin{eqnarray*}
&&h(\mathcal{S}\mathcal{L}^5\mathcal{S})\\&=&(\mu^R\ell^R)(\overline{\ell\ell^R})(\ell\ell^R)(\overline{\ell\ell^R})(\ell\ell^R)(\overline{\ell\ell^R})(\ell\mu)\\
&=&\mu^R\ell^R\overline{\ell\ell^R}\ell\ell^R\overline{\ell\ell^R}\ell\ell^R\overline{\ell\ell^R}\ell\mu\\
&=&\mu^R\hspace{.1in}\ell^R\overline{\ell\ell^R}\ell\hspace{.1in}\ell^R\overline{\ell\ell^R}\ell\hspace{.1in}\ell^R\overline{\ell\ell^R}\ell\hspace{.1in}\mu\\
\end{eqnarray*}
which contains the instance $xxx^R$ with $x=\ell^R\overline{\ell\ell^R}\ell$.\vspace{.1in} 

$b\in(\mathcal{S}+\mathcal{L})\mathcal{L}\mathcal{S}\mathcal{L}\mathcal{L}\mathcal{L}(\mathcal{S}+\mathcal{L}):$ In this case, it suffices to show that
 $h(\mathcal{S}\mathcal{L}\mathcal{S}\mathcal{L}\mathcal{L}\mathcal{L}\mathcal{S})$ contains a non-empty factor $xxx^R$, because of the results of Lemma~\ref{properties of h}. Here

\begin{eqnarray*}
&&h(\mathcal{S}\mathcal{L}\mathcal{S}\mathcal{L}\mathcal{L}\mathcal{L}\mathcal{S})\\
&=&(\ell\mu)(\overline{\ell\ell^R})(\ell\mu)(\overline{\ell\ell^R})(\ell\ell^R)(\overline{\ell\ell^R})(\mu^R\ell^R)\\
&=&\ell\mu\overline{\ell\ell^R}\ell\mu\overline{\ell\ell^R}\ell\ell^R\overline{\ell\ell^R}\mu^R\ell^R\\
&=&\ell\hspace{.1in}\mu\overline{\ell\ell^R}\ell\hspace{.1in}\mu\overline{\ell\ell^R}\ell\hspace{.1in}\ell^R\overline{\ell\ell^R}\mu^R\hspace{.1in}\ell^R\\
\end{eqnarray*}
which contains the instance $xxx^R$ with $x=\mu\overline{\ell\ell^R}\ell$. \vspace{.1in} 

$b\in\Phi((\mathcal{S}+\mathcal{L})\mathcal{S}\mathcal{S}(\mathcal{S}+\mathcal{L})):$ In this case, it suffices to show that
 $h(\Phi(\mathcal{S}\mathcal{S}\mathcal{S}\mathcal{S}))$ contains a non-empty factor $xxx^R$, because of the results of Lemma~\ref{properties of h and phi}. In particular, $h(\Phi(\mathcal{S}\mathcal{S}\mathcal{S}\mathcal{S}))$ is a prefix of $h(\Phi(\mathcal{S}\mathcal{S}\mathcal{S}\mathcal{L}))$, 
$\overline{h(\Phi(\mathcal{S}\mathcal{S}\mathcal{S}\mathcal{S}))}$
is a suffix of $h(\Phi(\mathcal{L}\mathcal{S}\mathcal{S}\mathcal{S}))$,
and $\overline{h(\Phi(\mathcal{S}\mathcal{S}\mathcal{S}\mathcal{L}))}$
is a suffix of $h(\Phi(\mathcal{L}\mathcal{S}\mathcal{S}\mathcal{L}))$. However,
\begin{eqnarray*}
&&h(\Phi(\mathcal{S}\mathcal{S}\mathcal{S}\mathcal{S}))\\
&=&(\mu^R\ell^R\overline{\ell\ell^R})(\mu^R\ell^R\overline{\ell\ell^R})(\mu^R\ell^R\overline{\ell\ell^R}(\mu^R\ell^R\overline{\ell\ell^R})\\
&=&\mu^R\ell^R\overline{\ell}\hspace{.1in}\overline{\ell^R}\mu^R\ell^R\overline{\ell}\hspace{.1in}
\overline{\ell^R}\mu^R\ell^R\overline{\ell}\hspace{.1in}\overline{\ell^R}\mu^R\ell^R
\overline{\ell}\hspace{.1in}\overline{\ell^R}\end{eqnarray*}
containing an instance of $xxx^R$, with $x=\overline{\ell^R}\mu^R\ell^R\overline{\ell}$. 
\vspace{.1in} 

$b\in\Phi((\mathcal{S}+\mathcal{L})
\mathcal{L}\mathcal{L}\mathcal{L}\mathcal{S}\mathcal{L}(\mathcal{L}+\mathcal{S}\mathcal{S}+\mathcal{S}\mathcal{L})):$ In this case, it suffices to show that
 $h(\Phi(\mathcal{S}\mathcal{L}\mathcal{L}\mathcal{L}\mathcal{S}\mathcal{L}\mathcal{L}))p^{-1}$ contains a non-empty factor $xxx^R$, because of the results of Lemma~\ref{properties of h and phi}. But

\begin{eqnarray*}
&&h(\Phi(\mathcal{S}\mathcal{L}\mathcal{L}\mathcal{L}\mathcal{S}\mathcal{L}\mathcal{L}))p^{-1}\\
&=&(\mu^R\ell^R\overline{\ell\ell^R})
(\mu^R\ell^R\overline{\ell\ell^R}\ell\ell^R)
(\overline{\mu^R\ell^R}\ell\ell^R\overline{\ell\ell^R})
(\mu^R\ell^R\overline{\ell\ell^R}\ell\ell^R)
(\overline{\mu^R\ell^R}\ell\ell^R)
(\overline{\ell\mu}\ell\ell^R\overline{\ell\ell^R})
(\ell\mu\overline{\ell\ell^R}
\ell\mu\overline{\mu^R})\\
&=&\mu^R\hspace{.1in}\ell^R\overline{\ell\ell^R}
\mu^R\ell^R\overline{\ell\ell^R}\ell\ell^R
\overline{\mu^R\ell^R}\ell
\hspace{.1in}
\ell^R\overline{\ell\ell^R}
\mu^R\ell^R\overline{\ell\ell^R}\ell\ell^R
\overline{\mu^R\ell^R}\ell
\hspace{.1in}
\ell^R
\overline{\ell\mu}\ell\ell^R\overline{\ell\ell^R}
\ell\mu\overline{\ell\ell^R}
\ell\hspace{.1in}\mu\overline{\mu^R}
\end{eqnarray*}
 an instance of $xxx^R$ with $x=\ell^R\overline{\ell\ell^R}
\mu^R\ell^R\overline{\ell\ell^R}\ell\ell^R
\overline{\mu^R\ell^R}\ell
$.\vspace{.1in}

$b\in\Phi^2(\mathcal{L}\mathcal{L}\mathcal{L}(\mathcal{S}+\mathcal{L})):$ In this case, it suffices to show that
 $h(\Phi^2(\mathcal{L}\mathcal{L}\mathcal{L}\mathcal{S}))$ contains a non-empty factor $xxx^R$, because of the results of Lemma~\ref{properties of h and phi}. But

\begin{eqnarray*}
&&h(\Phi^2(\mathcal{L}\mathcal{L}\mathcal{L}\mathcal{S}))\\
&=&(\ell\mu\overline{\ell\ell^R}
\ell\mu\overline{\ell\ell^R} \ell\ell^R
\overline{\ell\mu} \ell\ell^R\overline{\ell\ell^R})
( \ell\mu\overline{\ell\ell^R}
\ell\mu\overline{\ell\ell^R} \ell\ell^R
\overline{\ell\mu} \ell\ell^R\overline{\ell\ell^R})
(\ell\mu\overline{\ell\ell^R}
\mu^R\ell^R\overline{\ell\ell^R} \ell\ell^R
\overline{\mu^R\ell^R} \ell\ell^R\overline{\ell\ell^R})
 \\&&
(\mu^R\ell^R\overline{\ell\ell^R}\ell\mu\overline{\ell\ell^R} \ell\ell^R)
\\
&=&\ell\mu\overline{\ell}\cdot\overline{\ell^R}
\ell\mu\overline{\ell\ell^R} \ell\ell^R
\overline{\ell\mu} \ell\ell^R\overline{\ell\ell^R}
\ell\mu\overline{\ell}\cdot\overline{\ell^R}
\ell\mu\overline{\ell\ell^R} \ell\ell^R
\overline{\ell\mu} \ell\ell^R\overline{\ell\ell^R}
\ell\mu\overline{\ell}\cdot\overline{\ell^R}
\mu^R\ell^R\overline{\ell\ell^R} \ell\ell^R
\overline{\mu^R\ell^R} \ell\ell^R\overline{\ell\ell^R}
 \mu^R\ell^R\overline{\ell}\\
&&\cdot\overline{\ell^R}\ell\mu\overline{\ell\ell^R} \ell\ell^R
\end{eqnarray*}
containing an instance of $xxx^R$, with $x=
\overline{\ell^R}
\ell\mu\overline{\ell\ell^R} \ell\ell^R
\overline{\ell\mu} \ell\ell^R\overline{\ell\ell^R}
\ell\mu\overline{\ell}$. 

$b\in\Phi^2((\mathcal{S}+\mathcal{L})\mathcal{L}\mathcal{S}\mathcal{S}(\mathcal{S}+\mathcal{L})):$ In this case, it suffices to show that
 $h(\Phi^2(\mathcal{S}\mathcal{L}\mathcal{S}\mathcal{S}\mathcal{S}))$ contains a non-empty factor $xxx^R$, because of the results of Lemma~\ref{properties of h and phi}. Now

\begin{eqnarray*}
&&h(\Phi^2(\mathcal{S}\mathcal{L}\mathcal{S}\mathcal{S}\mathcal{S}))\\
&=&(\ell\mu\overline{\ell\ell^R}
\ell\mu\overline{\ell}\overline{\ell^R}\ell\ell^R)
(\overline{\ell\mu}\ell\ell^R\overline{\ell\mu} \ell\ell^R\overline{\ell\ell^R}\ell\mu\overline{\ell}
\overline{\ell^R} \ell\ell^R)
(\overline{\ell\mu}\ell\ell^R
\overline{\ell\mu}\ell\ell^R\overline{\ell\ell^R} )
(\ell\mu\overline{\ell}\overline{\ell^R}
\mu^R\ell^R\overline{\ell\ell^R} \ell\ell^R)
\\
&&(\overline{\mu^R\ell^R} \ell\ell^R
\overline{\mu^R\ell^R} \ell\ell^R\overline{\ell}\overline{\ell^R})
\\
&=&\ell\mu\overline{\ell\ell^R} \ell\mu\overline{\ell}
\cdot\overline{\ell^R} \ell\ell^R\overline{\ell\mu} \ell\ell^R\overline{\ell\mu} \ell\ell^R\overline{\ell\ell^R} \ell\mu\overline{\ell}
\cdot\overline{\ell^R} \ell\ell^R\overline{\ell\mu}
\ell\ell^R\overline{\ell\mu} \ell\ell^R\overline{\ell\ell^R} 
\ell\mu\overline{\ell}
\\&&
\cdot\overline{\ell^R} \mu^R\ell^R
\overline{\ell\ell^R} \ell\ell^R\overline{\mu^R\ell^R} \ell\ell^R
\overline{\mu^R\ell^R} \ell\ell^R\overline{\ell}
\cdot\overline{\ell^R}
\end{eqnarray*}
containing an instance of $xxx^R$, with $x=
\overline{\ell^R} \ell\ell^R\overline{\ell\mu} \ell\ell^R\overline{\ell\mu} \ell\ell^R\overline{\ell\ell^R} \ell\mu\overline{\ell}.$\vspace{.1in}

$b\in\Phi^2((\mathcal{S}+\mathcal{L})\mathcal{S}\mathcal{S}\mathcal{S}\mathcal{S}\mathcal{S}(\mathcal{S}+\mathcal{L})):$ In this case, it suffices to show that
 $h(\Phi^2(\mathcal{S}^7))$ contains a non-empty factor $xxx^R$, because of the results of Lemma~\ref{properties of h and phi}. Finally,

\begin{eqnarray*}
&&h(\Phi^2(\mathcal{S}^7))\\
&=&
(\ell\mu\overline{\ell\ell^R}\ell\mu\overline{\ell\ell^R}\ell\ell^R)
(\overline{\ell\mu}\ell\ell^R\overline{\mu^R\ell^R}\ell\ell^R\overline{\ell\ell^R})
(\ell\mu\overline{\ell\ell^R}\ell\mu\overline{\ell\ell^R}\ell\ell^R)
(\overline{\ell\mu}\ell\ell^R\overline{\mu^R\ell^R}\ell\ell^R\overline{\ell\ell^R})
\\&&
(\ell\mu\overline{\ell\ell^R}\ell\mu\overline{\ell\ell^R}\ell\ell^R)
(\overline{\ell\mu}\ell\ell^R\overline{\mu^R\ell^R}\ell\ell^R\overline{\ell\ell^R})
(\ell\mu\overline{\ell\ell^R}\ell\mu\overline{\ell\ell^R}\ell\ell^R)
\\ &=&\ell\mu\overline{\ell}
\cdot
\overline{\ell^R}\ell\mu
\overline{\ell\ell^R}
\ell\ell^R
\overline{\ell\mu}
\ell
\ell^R
\overline{\mu^R\ell^R}
\ell\ell^R 
\overline{\ell\ell^R}
\mu^R\ell^R\overline{\ell}
\cdot
\overline{\ell^R}\ell\mu
\overline{\ell\ell^R}
\ell\ell^R
\overline{\ell\mu}
\ell
\ell^R
\overline{\mu^R\ell^R}
\ell\ell^R
\overline{\ell\ell^R}
\mu^R\ell^R\overline{\ell}
\\
&&\cdot
\overline{\ell^R}\ell\mu
\overline{\ell\ell^R}\ell\ell^R
\overline{\ell\mu}
\ell\ell^R
\overline{\mu^R\ell^R}
\ell\ell^R 
\overline{\ell\ell^R}
\mu^R\ell^R\overline{\ell}
\cdot
\overline{\ell^R}\mu^R\ell^R\overline{\ell\ell^R}\ell\ell^R
\end{eqnarray*}

containing an instance of $xxx^R$, with $x=
\overline{\ell^R}\ell\mu
\overline{\ell\ell^R}
\ell\ell^R
\overline{\ell\mu}
\ell
\ell^R
\overline{\mu^R\ell^R}
\ell\ell^R
\overline{\ell\ell^R}
\mu^R\ell^R\overline{\ell}.$\end{proof}

\section{Parsing words of $\mathcal{M}$ using $\Phi$}

\begin{lemma}\label{SL}Let $y\in \{\mathcal{L},\mathcal{S}\}^*\cap\mathcal{M}$. Then  $y$ can be written
\[
y = p_1 \Phi(y_1) s_1t_1,
\]
where $|p_1|,|s_1| \leq 9$, $y_1 \in \{\mathcal{L},\mathcal{S}\}^*$, and $t_1 \in
(\epsilon+\mathcal{S}+\mathcal{S}^2+\mathcal{S}^3)\mathcal{L}\mathcal{S}^*+
\mathcal{S}^*(\epsilon+\mathcal{L}+\mathcal{L}\mathcal{S})$. (Here all lengths are as words of $ \{\mathcal{L},\mathcal{S}\}^*$; thus, for example $|p_1|=|p_1|_\mathcal{L}+|p_1|_\mathcal{S}$.)
\end{lemma}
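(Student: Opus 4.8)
The plan is to run the same kind of ``peeling'' argument used in the proof of Theorem~\ref{binary to h}, one level up: strip from $y$ a bounded prefix and a bounded suffix (plus a structured tail $t_1$) and recognize the middle as an image of $\Phi$. The engine is Lemma~\ref{B-lemma}: since $y\in\mathcal{M}$, no word of $\mathcal{B}$ occurs in $y$, and $\mathcal{B}$ was built precisely so as to force the interior of $y$ to be a legal $\Phi$-image. It is convenient to note that, because $\Phi(\mathcal{S})=\mathcal{S}\mathcal{L}$ and $\Phi(\mathcal{L})=\mathcal{S}\mathcal{L}\mathcal{L}$, the image of $\Phi$ is exactly $(\mathcal{S}\mathcal{L}+\mathcal{S}\mathcal{L}\mathcal{L})^*$; a word over $\{\mathcal{S},\mathcal{L}\}$ lies in this set if and only if it begins with $\mathcal{S}$, ends with $\mathcal{L}$, contains no factor $\mathcal{S}\mathcal{S}$, and contains no factor $\mathcal{L}\mathcal{L}\mathcal{L}$, in which case the block factorization---hence the preimage $y_1$---is uniquely determined. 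So the task reduces to removing from $y$ a prefix $p_1$ with $|p_1|\le 9$ and a suffix $s_1t_1$ with $|s_1|\le 9$ and $t_1$ of the prescribed shape, so that what remains enjoys these four properties.

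First I would dispatch short words: if $||y||\le 18$, take $p_1$ to be the first $\min(9,||y||)$ letters, $s_1$ the remainder, and $y_1,t_1$ empty; so assume $||y||$ is large. Next I would analyze the left end of $y$: reading the initial maximal run of $\mathcal{L}$'s and the next few blocks and invoking the members of $\mathcal{B}$ that constrain $\mathcal{L}$-runs---pieces~3 and~4 at the word level (no internal $\mathcal{L}^5$, no internal $\mathcal{L}\mathcal{S}\mathcal{L}\mathcal{L}\mathcal{L}$) together with the $\Phi$- and $\Phi^2$-level pieces~6 and~7---I would show that after deleting a prefix $p_1$ of length at most $9$ the word continues with a letter $\mathcal{S}$ that must open a $\Phi$-block, and that from that point on, until the irregular tail, every maximal $\mathcal{L}$-run has length $1$ or $2$. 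The right end is the subtle one, and this is where the tail $t_1$ is unavoidable: the members of $\mathcal{B}$ that bound $\mathcal{S}$-runs---pieces~1,~2,~5,~8, and~9---only forbid long $\mathcal{S}$-runs that have a letter on \emph{both} sides (or that sit in a $\Phi$- or $\Phi^2$-interior), so $y$ may genuinely end in a long run of $\mathcal{S}$'s, optionally interrupted once by a single $\mathcal{L}$ (pieces~1 and~2 then bound how much follows such an $\mathcal{L}$). Such an ending cannot belong to any $\Phi$-image, so after a bounded correction $s_1$ I would collect it wholesale into $t_1$; the two summands $(\epsilon+\mathcal{S}+\mathcal{S}^2+\mathcal{S}^3)\mathcal{L}\mathcal{S}^*$ and $\mathcal{S}^*(\epsilon+\mathcal{L}+\mathcal{L}\mathcal{S})$ correspond exactly to whether the terminal $\mathcal{S}$-run is or is not interrupted by an $\mathcal{L}$.

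With $p_1$ stripped on the left and $s_1t_1$ on the right, I would verify that the central factor $w$ begins with $\mathcal{S}$, ends with $\mathcal{L}$, and (using pieces~3--9 again, now to exclude interior $\mathcal{S}\mathcal{S}$ and $\mathcal{L}\mathcal{L}\mathcal{L}$) contains neither of those factors; by the characterization above, $w=\Phi(y_1)$ for a unique $y_1\in\{\mathcal{S},\mathcal{L}\}^*$, so $y=p_1\Phi(y_1)s_1t_1$. The bounds $|p_1|,|s_1|\le 9$ are then confirmed by summing the lengths of the prefixes and suffixes used in the two end analyses.

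The main obstacle is the right-end analysis together with pinning down the constant $9$: one must enumerate the possible shapes of the last few blocks of $y$, decide in each case exactly how much goes into $s_1$, how much into $t_1$, and how much stays inside $\Phi(y_1)$, and check that no admissible ending escapes the stated form of $t_1$. A secondary, pervasive difficulty is that the forbidden factors supplied by $\mathcal{B}$ are \emph{internal}---they require context on both sides---so the bookkeeping must carefully separate genuine interior occurrences from occurrences abutting an end of $y$; it is exactly this asymmetry between the two ends (a long $\mathcal{L}$-run cannot survive at the left, but a long $\mathcal{S}$-run can survive at the right) that forces the extra tail $t_1$ rather than a single uniform constant on both sides.
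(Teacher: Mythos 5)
Your proposal is correct in outline and takes essentially the same route as the paper's proof: use Lemma~\ref{B-lemma} to bound the interior $\mathcal{S}$-runs and $\mathcal{L}$-runs, split off a bounded prefix, a bounded suffix, and the irregular $\mathcal{S}$-rich tail $t_1$ on the right (exactly because the $\mathcal{S}$-run constraints need context on both sides), and then identify the remaining $\mathcal{S}\mathcal{S}$-free, $\mathcal{L}\mathcal{L}\mathcal{L}$-free middle with $\Phi(y_1)$ via your characterization of $(\mathcal{S}\mathcal{L}+\mathcal{S}\mathcal{L}\mathcal{L})^*$, with the same prefix budget $9$. The one inaccuracy is in which members of $\mathcal{B}$ do the work: the paper needs only the four word-level pieces (your 1--4) --- interior $\mathcal{S}\mathcal{S}$ is excluded by pieces 1--2, and interior $\mathcal{L}^3$, $\mathcal{L}^4$ by piece 4 combined with the already-established absence of $\mathcal{S}\mathcal{S}$ via a short extension argument (this is what fixes the trimming constants) --- whereas the $\Phi$- and $\Phi^2$-level pieces you cite (6, 7, and ``3--9'' for $\mathcal{S}\mathcal{S}$) are specific long words that neither are needed nor directly constrain the runs of $y$ at this stage; they are reserved for Lemmas~\ref{XY} and~\ref{AB}.
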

\begin{proof}
Suppose that $|y|_\mathcal{L}=n$. If $n=0$, the lemma is true, letting $t_1=y$. If $n=1$, write $y=\mathcal{S}^k\mathcal{L}\mathcal{S}^j$. Since by Lemma~\ref{B-lemma}, $\mathcal{S}\mathcal{S}\mathcal{S}\mathcal{S}\mathcal{L}\mathcal{S}\mathcal{S}$ cannot be a factor of $y\in\mathcal{M}$, we have $k\le 3$ or $j\le 1$; thus we can again let $t_1=y$, and we are again done.

Suppose from now on, that $n\ge 2$, and write $y=(\prod_{i=1}^n \mathcal{S}^{m_i}\mathcal{L})\mathcal{S}^{m_{n+1}}$, where each $m_i\ge 0$. For $1\le i\le n-1$, word $\mathcal{L}\mathcal{S}^{m_{i+1}}\mathcal{L}$ has one of $\mathcal{L}\mathcal{L}$, $\mathcal{L}\mathcal{S}\mathcal{L}$ or $\mathcal{L}\mathcal{S}\mathcal{S}$ as a prefix, depending on whether $m_{i+1}=0,1$ or $m_{i+1}\ge 2$, respectively. This implies that for $1\le i\le n-1$, we have $m_i\le 3$, since by Lemma~\ref{B-lemma}, no word of $\mathcal{S}^4(\mathcal{L}\mathcal{L}+\mathcal{L}\mathcal{S}\mathcal{L}+\mathcal{L}\mathcal{S}\mathcal{S})$ can be a factor of $y\in\mathcal{M}.$ For $2\le i\le n-1$, we have $m_i\le 1$, since no word of $\mathcal{L}(\mathcal{S}^2+\mathcal{S}^3)(\mathcal{L}\mathcal{L}+\mathcal{L}\mathcal{S}\mathcal{L}+\mathcal{L}\mathcal{S}\mathcal{S})$, can appear in $y$. Since $\mathcal{S}^4\mathcal{L}\mathcal{S}^2$ cannot be a factor of $y\in \mathcal{M}$, if $m_{n+1}\ge 2$, then $m_n\le 3$. We have thus established that $$y\in (\epsilon+\mathcal{S}+\mathcal{S}^2+\mathcal{S}^3)\mathcal{L}\left((\epsilon+\mathcal{S})\mathcal{L}\right)^*
\left((\epsilon+\mathcal{S}+\mathcal{S}^2+\mathcal{S}^3)\mathcal{L}\mathcal{S}\mathcal{S}\mathcal{S}^*
+\mathcal{S}^*\mathcal{L}(\epsilon+\mathcal{S})\right)
$$

Write $y=p'y't_1$, where $$p'\in(\epsilon+\mathcal{S}+\mathcal{S}^2+\mathcal{S}^3), y'\in \mathcal{L}\left((\epsilon+\mathcal{S})\mathcal{L}\right)^*,$$ $$t_1\in(\epsilon+\mathcal{S}+\mathcal{S}^2+\mathcal{S}^3)\mathcal{L}\mathcal{S}\mathcal{S}\mathcal{S}^*+
\mathcal{S}^*\mathcal{L}(\epsilon+\mathcal{S}).$$ In particular, $\mathcal{S}\mathcal{S}$ is not a factor of $y'$.

Without loss of generality, suppose $|y|\ge 7$ and
$|y'|\ge 6$. (If $|y|\le 6$ or $|y'|\le 5$, let $p_1=p'y'$, $y_1=s_1=\epsilon$, and the lemma holds. Write $y'=p^{\prime\prime}y^{\prime\prime}s_1$, where $|p^{\prime\prime}|=4$,  $|s_1|=2$. We next consider the placement in $y$, $y'$, $y^{\prime\prime}$ of hypothetical factors $\mathcal{L}^k$, $k\ge 3$:
\begin{itemize}

\item{$\mathcal{L}^k$, $k\ge 6$, cannot be a factor of $y$:} If $\mathcal{L}^6$ is a factor of $y$, so is one of $\mathcal{S}\mathcal{L}^6$, $\mathcal{L}^6\mathcal{S}$ or $\mathcal{L}^7$, since $|y|\ge 7$; this is impossible.

\item{$\mathcal{L}^5$ can only appear in $y$ as a prefix or suffix:} Otherwise, $y$ contains some two-sided extension of $\mathcal{L}^5$. As $\mathcal{L}^6$ is not a factor of $y$, this must be $\mathcal{S}\mathcal{L}^5\mathcal{S}$. This is impossible by Lemma~\ref{B-lemma}.

\item{$\mathcal{L}^4$ is not a factor of $\rho y^{\prime\prime}\sigma$, where $\rho$ is the last letter of $p^{\prime\prime}$ and $\sigma$ is the first letter of $s_1$:} The length 5 left extension of an occurrence of $\mathcal{L}^4$ in $\rho y^{\prime\prime}\sigma$ cannot be $\mathcal{L}^5$ because of the previous paragraph; it must be $\mathcal{S}\mathcal{L}^4$. Since $\mathcal{S}\mathcal{S}$ is not a factor of $y'$, the further left extension $\mathcal{L}\mathcal{S}\mathcal{L}^4$ must thus also be a factor of $y'$.  However, this forces $y'$ to contain one of the further left extensions $\mathcal{L}\mathcal{L}\mathcal{S}\mathcal{L}^4$ and $\mathcal{S}\mathcal{L}\mathcal{S}\mathcal{L}^4$, which is impossible. 

\item{$\mathcal{L}^3$ is not a factor of $y^{\prime\prime}$:} Suppose that $\mathcal{L}^3$ is a factor of $y^{\prime\prime}$. By the previous paragraph, its extension $\mathcal{S}\mathcal{L}^3\mathcal{S}$ is a factor of $\rho y^{\prime\prime}\sigma$. Since $\mathcal{S}\mathcal{S}$ is not a factor of $y'$, the extension of $\mathcal{S}\mathcal{L}^3\mathcal{S}$ to $\mathcal{L}S\mathcal{L}^3\mathcal{S}$ must be a factor of $y'$. One of the further left extensions $\mathcal{L}\mathcal{L}\mathcal{S}\mathcal{L}^3\mathcal{S}$ and $\mathcal{S}\mathcal{L}\mathcal{S}\mathcal{L}^3\mathcal{S}$ must thus occur in $y'$, but these are impossible by Lemma~\ref{B-lemma}.
\end{itemize}

We have now shown that neither of $\mathcal{S}^2$ and $\mathcal{L}^3$ can be a factor of $y^{\prime\prime}$. Thus 
$$y^{\prime\prime}\in(\mathcal{L}+\mathcal{L}\mathcal{L})(\mathcal{S}\mathcal{L}+\mathcal{S}\mathcal{L}\mathcal{L})^*.$$ Let $p^{\prime\prime\prime}$ be the longest prefix of $y^{\prime\prime}$ of the form $\mathcal{L}^k$, and write $y^{\prime\prime}=p^{\prime\prime\prime}y_1$. Letting $p_1=p'p^{\prime\prime}p^{\prime\prime\prime}$, we have $|p_1|\le 3+4+2$, so the lemma holds.
\end{proof}

\section{Parsing words of $\mathcal{M}$ using $\Phi^2$}

\begin{lemma}\label{XY}
Let $y_1\in\{\mathcal{L},\mathcal{S}\}^*$, such that $\Phi(y_1)\in\mathcal{M}$. Then $y_1$ can be written
\[
y_1 = p_2 \Phi(y_2)s_2t_2,
\]
$$\mbox{where }|p_2|,|s_2| \leq 4, y_2 \in \{\mathcal{L},\mathcal{S}\}^*\mbox{ and }$$
$$t_2 \in
\left((\epsilon+\mathcal{L}+\mathcal{L}^2+\mathcal{L}^3)
\mathcal{S}\mathcal{L}^*+\mathcal{L}^*(\epsilon+\mathcal{S}+\mathcal{S}\mathcal{L})\right)(\epsilon+\mathcal{S}+\mathcal{L}).$$
\end{lemma}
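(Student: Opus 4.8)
The plan is to mimic the structure of the proof of Lemma~\ref{SL}, but applied one level higher: where Lemma~\ref{SL} uses the forbidden factors of $\mathcal{B}$ (via Lemma~\ref{B-lemma}) to constrain how $\mathcal{S}$'s can cluster between $\mathcal{L}$'s in a word of $\mathcal{M}$, here we want to constrain how $\mathcal{L}$'s can cluster between $\mathcal{S}$'s in a word $y_1$ with $\Phi(y_1)\in\mathcal{M}$. The key observation is that since $\Phi(y_1)\in\mathcal{M}$, Lemma~\ref{B-lemma} says no word of $\mathcal{B}$ is a factor of $\Phi(y_1)$; in particular the pieces of $\mathcal{B}$ of the form $\Phi(\cdots)$ translate directly into forbidden factors of $y_1$ itself. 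So first I would read off from the third, fifth, sixth, seventh, eighth and ninth lines of the definition of $\mathcal{B}$ (the ones involving $\Phi$ or $\Phi^2$) the corresponding forbidden factors of $y_1$: namely $(\mathcal{S}+\mathcal{L})\mathcal{S}\mathcal{S}(\mathcal{S}+\mathcal{L})$ is forbidden (from $\Phi((\mathcal{S}+\mathcal{L})\mathcal{S}\mathcal{S}(\mathcal{S}+\mathcal{L}))\subseteq\mathcal{B}$), $(\mathcal{S}+\mathcal{L})\mathcal{L}\mathcal{L}\mathcal{L}\mathcal{S}\mathcal{L}(\mathcal{L}+\mathcal{S}\mathcal{S}+\mathcal{S}\mathcal{L})$ is forbidden, and (after one more application of $\Phi$, using that $\Phi^2=\Phi\circ\Phi$) the patterns $\mathcal{L}\mathcal{L}\mathcal{L}(\mathcal{S}+\mathcal{L})$, $(\mathcal{S}+\mathcal{L})\mathcal{L}\mathcal{S}\mathcal{S}(\mathcal{S}+\mathcal{L})$, and $(\mathcal{S}+\mathcal{L})\mathcal{S}^6(\mathcal{S}+\mathcal{L})$ become forbidden factors of $y_1$.

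With these forbidden factors in hand, the combinatorial argument runs almost exactly parallel to Lemma~\ref{SL}, with the roles of $\mathcal{S}$ and $\mathcal{L}$ interchanged and the ``base cases'' adjusted. Set $|y_1|_{\mathcal{S}}=n$. If $n=0$ then $y_1\in\mathcal{L}^*$, and $\mathcal{L}\mathcal{L}\mathcal{L}\mathcal{L}$ being forbidden (from $\mathcal{L}\mathcal{L}\mathcal{L}\mathcal{L}$ appearing, via $\mathcal{L}\mathcal{L}\mathcal{L}(\mathcal{S}+\mathcal{L})$) forces $|y_1|\le 3$, so we may take $t_2=y_1$. If $n=1$, write $y_1=\mathcal{L}^k\mathcal{S}\mathcal{L}^j$; the forbidden factor $\mathcal{L}\mathcal{L}\mathcal{L}\mathcal{S}\mathcal{L}\mathcal{L}$ (from the sixth line) and $\mathcal{L}^4$ force $k\le 3$ or $j\le 1$, and again $t_2=y_1$ works. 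For $n\ge 2$ write $y_1=(\prod_{i=1}^n\mathcal{L}^{m_i}\mathcal{S})\mathcal{L}^{m_{n+1}}$; the forbidden factors above give $m_i\le 3$ for $1\le i\le n-1$, then $m_i\le 1$ for $2\le i\le n-1$, and a tail condition bounding $m_n$ when $m_{n+1}\ge 2$. This pins $y_1$ down to
\[
y_1\in(\epsilon+\mathcal{L}+\mathcal{L}^2+\mathcal{L}^3)\mathcal{S}\bigl((\epsilon+\mathcal{L})\mathcal{S}\bigr)^*
\Bigl((\epsilon+\mathcal{L}+\mathcal{L}^2+\mathcal{L}^3)\mathcal{S}\mathcal{L}\mathcal{L}\mathcal{L}^*+\mathcal{L}^*\mathcal{S}(\epsilon+\mathcal{L})\Bigr),
\]
and then I would strip off a prefix $p'$ and tail $t_2$ as in Lemma~\ref{SL}, leaving a core $y'$ with no $\mathcal{L}\mathcal{L}$ factor, chop four more letters off the front and two off the back to get $y''$ with neither $\mathcal{L}^2$ nor $\mathcal{S}^3$ as a factor (using the $\mathcal{S}^6$ and $\mathcal{S}\mathcal{S}$ forbidden patterns to run the four bulleted sub-arguments about $\mathcal{S}^k$ clusters), and observe $y''\in(\mathcal{S}+\mathcal{S}^2)(\mathcal{L}\mathcal{S}+\mathcal{L}\mathcal{S}^2)^*$, which is exactly the image of $\{\mathcal{L},\mathcal{S}\}^*$ under $\Phi$ read off in blocks $\mathcal{S}\mathcal{L}=\Phi(\mathcal{S})$, $\mathcal{S}\mathcal{L}\mathcal{L}=\Phi(\mathcal{L})$ — wait, more precisely, after peeling a maximal $\mathcal{S}^k$ prefix off $y''$ what remains is $\Phi(y_2)$ for some $y_2$. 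Absorbing the peeled $\mathcal{S}^k$ into $p_2$ and bookkeeping the constants ($|p_2|\le 1+4$ — here the $\mathcal{L}^3$ bound on $p'$ is replaced by the $\mathcal{L}$ bound since there is no $\mathcal{L}\mathcal{L}$ after the head — plus the internal trims, all comfortably $\le 4$ with care; and $|s_2|=2\le 4$) gives the statement.

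The main obstacle I anticipate is the same one that makes Lemma~\ref{SL} delicate: getting the constants and the precise form of $t_2$ right, and in particular handling the boundary/tail cases cleanly. Unlike Lemma~\ref{SL}, the allowed tail here ends with an extra factor $(\epsilon+\mathcal{S}+\mathcal{L})$, which comes from the fact that after removing a $\Phi$-image we may be left with one or two ``orphan'' letters of $y_1$ at the right end that do not complete a $\Phi$-block; I would need to track carefully, using the forbidden factors, exactly which short suffixes can occur, and verify that they all fit inside $\bigl((\epsilon+\mathcal{L}+\mathcal{L}^2+\mathcal{L}^3)\mathcal{S}\mathcal{L}^*+\mathcal{L}^*(\epsilon+\mathcal{S}+\mathcal{S}\mathcal{L})\bigr)(\epsilon+\mathcal{S}+\mathcal{L})$. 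The inductive-extension sub-arguments (the four bullet points about how long a run of $\mathcal{S}$'s can survive inside $y''$) are routine once the right forbidden factors are identified, but they must be re-derived with $\mathcal{S}$ and $\mathcal{L}$ swapped and with the correct forbidden words, so some care is needed to make sure each left/right extension step really is blocked by a factor in the translated list.
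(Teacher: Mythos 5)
Your high-level plan (read forbidden factors of $y_1$ off Lemma~\ref{B-lemma} applied to $\Phi(y_1)$, then do a Lemma~\ref{SL}-style run-length analysis) is the paper's, but your translation of the $\Phi^2$-pieces of $\mathcal{B}$ is off by one level, and the error propagates. If $w$ is a factor of $y_1$, all you learn is that $\Phi(w)$ is a factor of $\Phi(y_1)$; so the $\Phi^2$-pieces of $\mathcal{B}$ only forbid the factors $\Phi(\mathcal{L}\mathcal{L}\mathcal{L}(\mathcal{S}+\mathcal{L}))$, $\Phi((\mathcal{S}+\mathcal{L})\mathcal{L}\mathcal{S}\mathcal{S}(\mathcal{S}+\mathcal{L}))$ and $\Phi((\mathcal{S}+\mathcal{L})\mathcal{S}^5(\mathcal{S}+\mathcal{L}))$ in $y_1$ --- not the base patterns $\mathcal{L}\mathcal{L}\mathcal{L}(\mathcal{S}+\mathcal{L})$, $(\mathcal{S}+\mathcal{L})\mathcal{L}\mathcal{S}\mathcal{S}(\mathcal{S}+\mathcal{L})$, $(\mathcal{S}+\mathcal{L})\mathcal{S}^6(\mathcal{S}+\mathcal{L})$ that you list. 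Those base patterns are the constraints on $y_2$ (a word with $\Phi^2(y_2)\in\mathcal{M}$) and are what Lemma~\ref{AB} uses, one level further up. Consequently you cannot conclude that $\mathcal{L}^4$ is forbidden in $y_1$ (note the lemma's own $t_2$ allows an arbitrarily long $\mathcal{L}^*$ tail, so your $n=0$ claim ``$|y_1|\le 3$'' is not even consistent with the statement you are proving), and your interior bound ``$m_i\le 1$'' is false: the correct interior bound is $m_i\le 2$, which is essential because $\Phi(\mathcal{L})=\mathcal{S}\mathcal{L}\mathcal{L}$ must be allowed to occur in the core.

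This shows up fatally in your description of the core: you arrive at a $y''$ with no $\mathcal{L}\mathcal{L}$ but possibly containing $\mathcal{S}\mathcal{S}$, of shape $(\mathcal{S}+\mathcal{S}^2)(\mathcal{L}\mathcal{S}+\mathcal{L}\mathcal{S}^2)^*$, and call it the image of $\Phi$. But $\Phi(\{\mathcal{S},\mathcal{L}\}^*)=(\mathcal{S}\mathcal{L}+\mathcal{S}\mathcal{L}\mathcal{L})^*$: its words never contain $\mathcal{S}\mathcal{S}$ and do contain $\mathcal{L}\mathcal{L}$ whenever $y_2$ uses $\mathcal{L}$, so your core has the wrong shape on both counts (interior $\mathcal{S}\mathcal{S}$ is genuinely forbidden, via the $\Phi((\mathcal{S}+\mathcal{L})\mathcal{S}\mathcal{S}(\mathcal{S}+\mathcal{L}))$-piece of $\mathcal{B}$, while $\mathcal{S}\mathcal{L}\mathcal{L}$ must be permitted), and the final identification of the core with $\Phi(y_2)$ does not go through; it seems you swapped $\mathcal{S}$ and $\mathcal{L}$ mechanically, but $\Phi$ is not symmetric under that swap. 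The paper's actual argument needs only the $\Phi$-preimages of the two $\Phi$-level lines of $\mathcal{B}$: interior $\mathcal{S}\mathcal{S}$ forbidden, plus the forbidden words $\mathcal{L}^4\mathcal{S}\mathcal{L}\mathcal{L}$, $\mathcal{L}^4\mathcal{S}\mathcal{L}\mathcal{S}\mathcal{L}$, $\mathcal{S}\mathcal{L}^3\mathcal{S}\mathcal{L}\mathcal{L}$, $\mathcal{S}\mathcal{L}^3\mathcal{S}\mathcal{L}\mathcal{S}\mathcal{L}$; from these it pins the trimmed word down to $(\epsilon+\mathcal{L}+\mathcal{L}^2+\mathcal{L}^3)(\mathcal{S}\mathcal{L}+\mathcal{S}\mathcal{L}\mathcal{L})^*\mathcal{S}\mathcal{L}^j\mathcal{S}\mathcal{L}^k$ with $j\le 1$ or $k\le 3$, reads $\Phi(y_2)$ off as the middle $(\mathcal{S}\mathcal{L}+\mathcal{S}\mathcal{L}\mathcal{L})^*$ block, takes $s_2=\mathcal{S}$, and needs no second trimming step or $\mathcal{S}$-run bullet analysis. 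To repair your write-up you would have to discard the level-shifted forbidden factors and redo the run-length bounds with the correct list.
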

\begin{proof}
From Lemma~\ref{B-lemma}, no word of

\begin{eqnarray*}
&&(\mathcal{S}+\mathcal{L})\mathcal{S}\mathcal{S}(\mathcal{S}+\mathcal{L})
\cup(\mathcal{S}+\mathcal{L})
\mathcal{L}\mathcal{L}\mathcal{L}\mathcal{S}\mathcal{L}(\mathcal{L}+\mathcal{S}\mathcal{S}+\mathcal{S}\mathcal{L})\\
&&
\cup\Phi(\mathcal{L}\mathcal{L}\mathcal{L}(\mathcal{S}+\mathcal{L}))
\cup\Phi((\mathcal{S}+\mathcal{L})\mathcal{L}\mathcal{S}\mathcal{S}(\mathcal{S}+\mathcal{L}))
\cup\Phi((\mathcal{S}+\mathcal{L})\mathcal{S}\mathcal{S}\mathcal{S}\mathcal{S}\mathcal{S}(\mathcal{S}+\mathcal{L}))
\end{eqnarray*}
can appear in $y_1$.
This includes all length 4 two-sided extensions of $\mathcal{S}\mathcal{S}$; it follows that $\mathcal{S}\mathcal{S}$ can only appear in  $y_1$ as a prefix or suffix.

If $|y_1|\le 1$, we are done. In this case, let $p_2=y_1$, $y_2=s_2=t_2=\epsilon$. Therefore, we will assume that  $|y_1|\ge 2$, and write $y_1=p'y's'$, $|p'|=|s'|=1$. Then $\mathcal{S}\mathcal{S}$ is not a factor of $y'$.

Suppose that $|y'|_\mathcal{S}=n$. If $n=0$, the lemma is true, letting $p_2=p'$, $y_2=s_2=\epsilon$, $ t_2=y's'$. If $n=1$, write $y'=\mathcal{L}^k\mathcal{S}\mathcal{L}^j$. Since $\mathcal{L}^4\mathcal{S}\mathcal{L}^2$ is not a factor of $y_1$, $k\le 3$ or $j\le 1$; thus we can let $p_2=p'$, $t_2=y's'$, and we are again done.

Suppose from now on, that $n\ge 2$, and write $y'=(\prod_{i=1}^n \mathcal{L}^{m_i}\mathcal{S})\mathcal{L}^{m_{n+1}}$, where each $m_i\ge 0$. For $1\le i\le n-1$, $m_{i+1}\le 1$, since $\mathcal{S}\mathcal{S}$ is not a factor of $y'$. It follows that for $1\le i\le n-2$ $\mathcal{S}\mathcal{L}^{m_{i+1}}\mathcal{S}\mathcal{L}^{m_{i+2}}$ has one of $\mathcal{S}\mathcal{L}\mathcal{S}\mathcal{L}$ or $\mathcal{S}\mathcal{L}\mathcal{L}$ as a prefix. This implies that for $1\le i\le n-2$, we have $m_i\le 3$, since $\mathcal{L}^4\mathcal{S}\mathcal{L}\mathcal{S}\mathcal{L}$ and $\mathcal{L}^4\mathcal{S}\mathcal{L}\mathcal{L}$ are not factors of $y_1$. In fact,  for $2\le i\le n-2$, we have $m_i\le 2$, since $\mathcal{S}\mathcal{L}^3\mathcal{S}\mathcal{L}\mathcal{S}\mathcal{L}$ and $\mathcal{S}\mathcal{L}^3\mathcal{S}\mathcal{L}\mathcal{L}$ are not factors of $y_1$. We have thus established that $$y'\in (\epsilon+\mathcal{L}+\mathcal{L}^2+\mathcal{L}^3)\left(\mathcal{S}\mathcal{L}+\mathcal{S}\mathcal{L}\mathcal{L}\right)^*\mathcal{S}\mathcal{L}^j\mathcal{S}\mathcal{L}^k
$$
Since $\mathcal{L}^4\mathcal{S}\mathcal{L}^2$ is not a factor of $y_1$, we require $k\le 3$ or $j\le 1$.
Write $y'=p^{\prime\prime}y_2\mathcal{S}t^{\prime\prime}$ where
$p^{\prime\prime}\in (\epsilon+\mathcal{L}+\mathcal{L}^2+\mathcal{L}^3)$, $y_2\in\left(\mathcal{S}\mathcal{L}+\mathcal{S}\mathcal{L}\mathcal{L}\right)^*$, $t^{\prime\prime}\in \mathcal{S}\mathcal{L}^k\mathcal{S}\mathcal{L}^j$, $k\le 3$ or $j\le 1$.
Let $p_2=p'p^{\prime\prime}$, $s_2=\mathcal{S}$, $t_2=t^{\prime\prime}s'$. The lemma is established
.
\end{proof}
\section{Parsing words of $\mathcal{M}$ using $\Phi^3$}

\begin{lemma}\label{AB}
Let $y_2\in\{\mathcal{L},\mathcal{S}\}^*$ such that $\Phi^2(y_2)\in\mathcal{M}$. Then  $y_2$ can be written
\[
y_2 = p_3 \Phi(y_3)s_3,
\]
$$\mbox{where }|p_3|,|s_3| \leq 6, y_3 \in \{\mathcal{L},\mathcal{S}\}^*.$$
\end{lemma}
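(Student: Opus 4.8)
The plan is to follow the pattern of Lemmas~\ref{SL} and~\ref{XY}: use Lemma~\ref{B-lemma} to list the factors that $y_2$ must avoid, deduce the block (run‑length) structure of $y_2$ from them, and then peel off a bounded prefix $p_3$ and a bounded suffix $s_3$ so that the remaining word is a concatenation of the blocks $\mathcal{S}\mathcal{L}$ and $\mathcal{S}\mathcal{L}\mathcal{L}$, i.e.\ a word of $\Phi(\{\mathcal{S},\mathcal{L}\}^*)=(\mathcal{S}\mathcal{L}+\mathcal{S}\mathcal{L}\mathcal{L})^*$.

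For the forbidden factors: if $v$ is a factor of $y_2$, then $\Phi^2(v)$ is a factor of $\Phi^2(y_2)$ because $\Phi^2$ is a morphism, so if $\Phi^2(v)\in\mathcal{B}$ we contradict $\Phi^2(y_2)\in\mathcal{M}$ via Lemma~\ref{B-lemma}. Since $\Phi^2(\mathcal{L}\mathcal{L}\mathcal{L}(\mathcal{S}+\mathcal{L}))$, $\Phi^2((\mathcal{S}+\mathcal{L})\mathcal{L}\mathcal{S}\mathcal{S}(\mathcal{S}+\mathcal{L}))$ and $\Phi^2((\mathcal{S}+\mathcal{L})\mathcal{S}^5(\mathcal{S}+\mathcal{L}))$ are three of the pieces of $\mathcal{B}$, it follows that $y_2$ contains none of
\[
\mathcal{L}\mathcal{L}\mathcal{L}\mathcal{S},\qquad \mathcal{L}\mathcal{L}\mathcal{L}\mathcal{L},\qquad (\mathcal{S}+\mathcal{L})\mathcal{L}\mathcal{S}\mathcal{S}(\mathcal{S}+\mathcal{L}),\qquad (\mathcal{S}+\mathcal{L})\mathcal{S}^5(\mathcal{S}+\mathcal{L})
\]
as a factor. (These four are all we shall need; the remaining pieces of $\mathcal{B}$ turn out to impose no additional restriction on $y_2$, since every word of $\Phi^2(\{\mathcal{S},\mathcal{L}\}^*)$ already avoids $\mathcal{S}\mathcal{S}$ and $\mathcal{L}\mathcal{L}\mathcal{L}$.)

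Now I would read off the structure of $y_2$. If $|y_2|\le 12$ we are done, taking $\Phi(y_3)=\epsilon$ and splitting $y_2$ into a prefix and a suffix each of length $\le 6$; so assume $|y_2|\ge 13$ and write $y_2=\mathcal{S}^{a_0}\mathcal{L}^{b_1}\mathcal{S}^{a_1}\mathcal{L}^{b_2}\cdots$ as a product of maximal blocks. Because $\mathcal{L}\mathcal{L}\mathcal{L}\mathcal{S}$ and $\mathcal{L}\mathcal{L}\mathcal{L}\mathcal{L}$ are forbidden, every $\mathcal{L}$-block has length $\le 2$ except possibly the last block of $y_2$, which then has length $\le 3$. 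Because $(\mathcal{S}+\mathcal{L})\mathcal{L}\mathcal{S}\mathcal{S}(\mathcal{S}+\mathcal{L})$ and $(\mathcal{S}+\mathcal{L})\mathcal{S}^5(\mathcal{S}+\mathcal{L})$ are forbidden, a short case analysis — according to whether a given $\mathcal{S}$-block has further letters on its left and on its right, the hypothesis $|y_2|\ge 13$ ruling out the degenerate configurations — shows that every $\mathcal{S}$-block that is neither the first nor the last block of $y_2$ has length exactly $1$, that the first block has length $\le 5$ if it is an $\mathcal{S}$-block, that the last block has length $\le 2$ if it is an $\mathcal{S}$-block, and that if $a_0=0$ then moreover $b_1\le 2$, with $a_1\le 4$ when $b_1=1$ and $a_1=1$ when $b_1=2$. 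In particular $y_2$ has no long run of either letter, which is why no ``tail'' term $t_3$ is needed here.

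Finally I would peel the ends. On the left take $p_3=\mathcal{S}^{a_0-1}$ if $a_0\ge 1$, and $p_3=\mathcal{L}\mathcal{S}^{a_1-1}$ or $p_3=\mathcal{L}\mathcal{L}$ (according as $b_1=1$ or $b_1=2$) if $a_0=0$; then $|p_3|\le 5$ and the remainder begins $\mathcal{S}\mathcal{L}^{c_1}\cdots$ with $c_1\le 2$. On the right take $s_3$ to be the trailing $\mathcal{S}$-block if $y_2$ ends in $\mathcal{S}$ (length $\le 2$), a single $\mathcal{L}$ if $y_2$ ends in $\mathcal{L}\mathcal{L}\mathcal{L}$, and $\epsilon$ otherwise; then $|s_3|\le 2$ and the remainder ends $\cdots\mathcal{S}\mathcal{L}^{c_t}$ with $c_t\le 2$. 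What lies between $p_3$ and $s_3$ then has the form $\mathcal{S}\mathcal{L}^{c_1}\mathcal{S}\mathcal{L}^{c_2}\cdots\mathcal{S}\mathcal{L}^{c_t}$ with every $c_i\in\{1,2\}$, which equals $\Phi(y_3)$ where $y_3[i]=\mathcal{S}$ if $c_i=1$ and $y_3[i]=\mathcal{L}$ if $c_i=2$. The step I expect to be most delicate is this last one: I must verify that peeling $p_3$ exposes a leading $\mathcal{L}$-block of length $\le 2$ and peeling $s_3$ a trailing $\mathcal{L}$-block of length $\le 2$ (both consequences of $\mathcal{L}\mathcal{L}\mathcal{L}\mathcal{S}$ being forbidden together with $|y_2|\ge 13$), that the interior $\mathcal{S}$-blocks that survive really are singletons, and that all the small-$|y_2|$ and $a_0=0$ exceptional configurations are disposed of without violating $|p_3|,|s_3|\le 6$ — exactly the kind of finite case-checking carried out in the proofs of Lemmas~\ref{SL} and~\ref{XY}.
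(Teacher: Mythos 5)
Your proposal is correct and takes essentially the same route as the paper: both deduce from the $\Phi^2$-pieces of $\mathcal{B}$ (via Lemma~\ref{B-lemma}) that $\mathcal{L}\mathcal{L}\mathcal{L}(\mathcal{S}+\mathcal{L})$, $(\mathcal{S}+\mathcal{L})\mathcal{L}\mathcal{S}\mathcal{S}(\mathcal{S}+\mathcal{L})$ and $(\mathcal{S}+\mathcal{L})\mathcal{S}^5(\mathcal{S}+\mathcal{L})$ cannot occur in $y_2$, and conclude that after trimming bounded ends the remainder avoids $\mathcal{S}\mathcal{S}$ and $\mathcal{L}^3$, hence lies in $(\mathcal{S}\mathcal{L}+\mathcal{S}\mathcal{L}\mathcal{L})^*=\Phi(\{\mathcal{S},\mathcal{L}\}^*)$. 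The only difference is bookkeeping: the paper strips a fixed prefix of length $4$ and suffix of length $1$ to dispose of the boundary occurrences, whereas you enumerate the first/last block exceptions explicitly, which works equally well within the bound $|p_3|,|s_3|\le 6$.
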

\begin{proof}
From Lemma~\ref{B-lemma}, no word of

$$\mathcal{L}\mathcal{L}\mathcal{L}(\mathcal{S}+\mathcal{L})
\cup(\mathcal{S}+\mathcal{L})\mathcal{L}\mathcal{S}\mathcal{S}(\mathcal{S}+\mathcal{L})
\cup(\mathcal{S}+\mathcal{L})\mathcal{S}\mathcal{S}\mathcal{S}\mathcal{S}\mathcal{S}(\mathcal{S}+\mathcal{L})
$$
can appear in $y_2$. These include both of the length 4 right extensions of $\mathcal{L}\mathcal{L}\mathcal{L}$; it follows that $\mathcal{L}\mathcal{L}\mathcal{L}$ can only appear in  $y_2$ as a suffix. They also include all of the length 5 two-sided extensions of $\mathcal{L}\mathcal{S}\mathcal{S}$; Thus $\mathcal{L}\mathcal{S}\mathcal{S}$ can appear in $y_2$ only as a prefix or suffix. Finally, they include all length 7 two-sided extensions of $\mathcal{S}^5$. Thus, $\mathcal{S}^5$ can only appear in $y_2$ as a suffix or prefix. 
If $|y_2|\le 4$, we are done. Assume that  $|y_2|\ge 5$, and write $y_2=p'y's'$, $|p'|=4$, $|s'|=1$. Then $\mathcal{L}\mathcal{L}\mathcal{L}$ is not a factor of $y'$. We also claim that $\mathcal{S}\mathcal{S}$ is not a factor of $y'$. Otherwise, $y_2$ has a factor $\rho \mathcal{S}\mathcal{S}$ which is not a suffix, with $|\rho|=4$. However, the length 5 suffix of $\rho \mathcal{S}\mathcal{S}$ is not a prefix or suffix of $y_2$, and contains either $\mathcal{S}^5$ or $\mathcal{L}\mathcal{S}\mathcal{S}$ as a factor; this is impossible.

Since neither of $\mathcal{L}^3$ or $\mathcal{S}^2$ is a factor of $y_2$, we have $y'\in(\epsilon+\mathcal{L}+\mathcal{L}^2)(\mathcal{S}\mathcal{L}+\mathcal{S}\mathcal{L}\mathcal{L})^*(\epsilon+\mathcal{S})$, and can write $y'=\mathcal{L}^k\Phi(y_3)\mathcal{S}^j$ where $k\le 2$, $s\le 1$. The lemma therefore holds.
\end{proof}

\section{A hierarchy of $S$'s and $L$'s}\label{hierarchy}

Combining Lemmas~\ref{SL} through \ref{AB} gives the following:

\begin{lemma}\label{Master lemma}
Let $y\in \{\mathcal{L},\mathcal{S}\}^*\cap\mathcal{M}$. Then  $y$ can be written
\[
y = p_1 \Phi(p_2 \Phi(p_3 \Phi(y_3)s_3)s_2t_2) s_1t_1,
\]
where $|p_1|,|s_1| \leq 9, |p_2|,|s_2| \leq 4, |p_3|,|s_3| \leq 6$, 
and
$$t_1 \in
(\epsilon+\mathcal{S}+\mathcal{S}^2+\mathcal{S}^3)\mathcal{L}\mathcal{S}^*+
\mathcal{S}^*(\epsilon+\mathcal{L}+\mathcal{L}\mathcal{S}),$$
$$t_2 \in
\left((\epsilon+\mathcal{L}+\mathcal{L}^2+\mathcal{L}^3)\mathcal{S}\mathcal{L}^*+
\mathcal{L}^*(\epsilon+\mathcal{S}+\mathcal{S}\mathcal{L})\right)(\epsilon+\mathcal{S}+\mathcal{L}).$$
\end{lemma}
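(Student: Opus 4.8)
The plan is to combine Lemmas~\ref{SL}, \ref{XY}, and \ref{AB} by repeated substitution, being careful that the hypotheses of each lemma are met when applied to the inner factors produced by the previous one.

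First I would start with an arbitrary $y\in\{\mathcal{L},\mathcal{S}\}^*\cap\mathcal{M}$ and apply Lemma~\ref{SL} to obtain $y=p_1\Phi(y_1)s_1t_1$ with $|p_1|,|s_1|\le 9$, $y_1\in\{\mathcal{L},\mathcal{S}\}^*$, and $t_1$ in the stated set. Next I would check that $y_1$ satisfies the hypothesis of Lemma~\ref{XY}, namely that $\Phi(y_1)\in\mathcal{M}$: this is the crux of the argument. Since $\Phi(y_1)$ is a factor of $y$ and $y\in\mathcal{M}$ — that is, $h(y)$ avoids $xxx^R$ — and since $h(\Phi(y_1))$ is (up to complementation, depending on parity) a factor of $h(y)$, it follows that $h(\Phi(y_1))$ also avoids $xxx^R$ (note that $xxx^R$ is preserved under complementation), so $\Phi(y_1)\in\mathcal{M}$. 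Applying Lemma~\ref{XY} then gives $y_1=p_2\Phi(y_2)s_2t_2$ with $|p_2|,|s_2|\le 4$ and $t_2$ in the stated set. Then I would repeat the same observation: $\Phi^2(y_2)$ is, up to complementation, a factor of $h(y)$, hence avoids $xxx^R$, so $\Phi^2(y_2)\in\mathcal{M}$, and Lemma~\ref{AB} applies to give $y_2=p_3\Phi(y_3)s_3$ with $|p_3|,|s_3|\le 6$ and $y_3\in\{\mathcal{L},\mathcal{S}\}^*$.

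Finally I would substitute back: $y_1 = p_2\Phi(p_3\Phi(y_3)s_3)s_2t_2$ from Lemmas~\ref{XY} and \ref{AB}, and then $y = p_1\Phi(y_1)s_1t_1 = p_1\Phi(p_2\Phi(p_3\Phi(y_3)s_3)s_2t_2)s_1t_1$, which is exactly the claimed form, with the length bounds and the membership statements for $t_1,t_2$ inherited directly from the three lemmas.

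The main obstacle is the verification that the inner words $y_1$ and $y_2$ actually lie in the domain where the next lemma is applicable, i.e.\ that $\Phi(y_1)\in\mathcal{M}$ and $\Phi^2(y_2)\in\mathcal{M}$. This hinges on the facts, noted in the excerpt, that $\Phi(y_1)$ and $\Phi^2(y_2)$ occur as factors of $y$ up to a possible global complementation (arising from the parity sensitivity of $h$), together with the elementary but essential remark that the property of avoiding $xxx^R$ is invariant under binary complementation — if $w=xxx^R$ then $\overline{w}=\overline{x}\,\overline{x}\,\overline{x}^R$ — and under taking factors. Once this bookkeeping is in place, the rest is a mechanical unfolding of the three parsing lemmas.
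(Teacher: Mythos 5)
Your proposal is correct and takes essentially the same route as the paper, which gives no written proof beyond ``Combining Lemmas~\ref{SL} through \ref{AB}'': you chain the three parsing lemmas and substitute back, exactly as intended. Your explicit verification that $\Phi(y_1)$ and $\Phi^2(y_2)$ lie in $\mathcal{M}$ --- via the fact that they are factors of $y$ and that $\mathcal{M}$ is factor-closed because $h$ of a factor appears in $h(y)$ up to complementation and $xxx^R$-avoidance is invariant under complementation --- is precisely the bookkeeping the paper leaves implicit.
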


\begin{corollary}\label{cor}
Let $y\in \{\mathcal{L},\mathcal{S}\}^*\cap\mathcal{M}$. Then there is a constant $\kappa$ such that $y$ can be written
\[
y = \pi\Phi^3(y_3)\sigma,
\]
where $\sigma$ can be written $\sigma_1\Phi(\mathcal{L}^j)\sigma_2\mathcal{S}^k\sigma_3$, with $|\pi \sigma_1\sigma_2\sigma_3| \leq \kappa.$
\end{corollary}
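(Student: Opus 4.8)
The plan is to combine Lemma~\ref{Master lemma} with the single fact that $\Phi$ is a morphism, so no new combinatorics on words is needed. First I would take the decomposition
\[
y = p_1 \Phi(p_2 \Phi(p_3 \Phi(y_3)s_3)s_2t_2) s_1t_1
\]
from Lemma~\ref{Master lemma} and expand it using $\Phi(uv)=\Phi(u)\Phi(v)$ (and hence $\Phi^2(p_3\Phi(y_3)s_3)=\Phi^2(p_3)\Phi^3(y_3)\Phi^2(s_3)$), obtaining
\[
y = p_1\,\Phi(p_2)\,\Phi^2(p_3)\cdot\Phi^3(y_3)\cdot\Phi^2(s_3)\,\Phi(s_2)\,\Phi(t_2)\,s_1\,t_1 .
\]
Setting $\pi=p_1\Phi(p_2)\Phi^2(p_3)$, the length $|\pi|$ is bounded by an absolute constant, since $|p_1|,|p_2|,|p_3|$ are bounded (by $9,4,6$) and applying $\Phi$ or $\Phi^2$ multiplies the length of a word over $\{\mathcal{S},\mathcal{L}\}$ by at most $3$ or $8$ respectively. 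I then set $\sigma=\Phi^2(s_3)\,\Phi(s_2)\,\Phi(t_2)\,s_1\,t_1$; here $\Phi^2(s_3)$, $\Phi(s_2)$ and $s_1$ are all of bounded length for the same reason, so the only potentially long blocks inside $\sigma$ are $\Phi(t_2)$ and $t_1$.

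The second step is to read off the shapes of $t_1$ and $t_2$ given in Lemma~\ref{Master lemma}. The word $t_1$ is either $q\,\mathcal{L}\,\mathcal{S}^k$ with $|q|\le 3$, or $\mathcal{S}^k\,q$ with $|q|\le 2$; in either case $t_1$ has the form (bounded word) $\mathcal{S}^k$ (bounded word), with one of the two bounded pieces empty. Similarly $t_2$ is either $\mathcal{L}^a\mathcal{S}\mathcal{L}^j\,q$ with $a\le 3$, $|q|\le 1$, or $\mathcal{L}^j\,q$ with $|q|\le 3$. Applying $\Phi$ and using $\Phi(\mathcal{L}^j)=\Phi(\mathcal{L})^j$, together with the fact that $\Phi(\mathcal{L}^a\mathcal{S})$ and $\Phi(q)$ have bounded length, we see that in both cases $\Phi(t_2)$ has the form (bounded word) $\Phi(\mathcal{L}^j)$ (bounded word).

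Finally I would run the four-way case split on the type of $t_1$ and the type of $t_2$. In each of the four cases, $\sigma$ is a concatenation consisting of the long block $\Phi(\mathcal{L}^j)$, then (after bounded material) the long block $\mathcal{S}^k$, with only words of absolutely bounded length interspersed between and around these two blocks. Collecting that bounded material into $\sigma_1$ (everything up to $\Phi(\mathcal{L}^j)$), $\sigma_2$ (everything between $\Phi(\mathcal{L}^j)$ and $\mathcal{S}^k$) and $\sigma_3$ (everything after $\mathcal{S}^k$) gives $\sigma=\sigma_1\Phi(\mathcal{L}^j)\sigma_2\mathcal{S}^k\sigma_3$, and taking $\kappa$ to be the sum of all the absolute bounds that appear yields $|\pi\sigma_1\sigma_2\sigma_3|\le\kappa$.

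I expect the argument to be essentially routine bookkeeping, so the ``main obstacle'' is really just the care needed in two places. One is the case where $t_2\in(\epsilon+\mathcal{L}+\mathcal{L}^2+\mathcal{L}^3)\mathcal{S}\mathcal{L}^*(\epsilon+\mathcal{S}+\mathcal{L})$: here $\Phi(t_2)$ has bounded material on \emph{both} sides of $\Phi(\mathcal{L}^j)$, and one must check this still matches the template, which it does, since the bounded suffix of $\Phi(t_2)$ simply gets absorbed into $\sigma_2$. The other is merely to confirm that the relevant constants are genuinely absolute, i.e.\ independent of $y$; this is clear because every bound invoked ($|p_i|$, $|s_i|$, the lengths of the short words in the descriptions of $t_1$ and $t_2$, and the expansion factors of $\Phi$ and $\Phi^2$) is a fixed constant. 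There is no subtlety with parity, since $\Phi$ is an ordinary morphism and the identity $\Phi(uv)=\Phi(u)\Phi(v)$ driving the whole argument holds unconditionally.
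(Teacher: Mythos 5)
Your proof is correct and matches the paper's intent exactly: the paper states Corollary~\ref{cor} as an immediate consequence of Lemma~\ref{Master lemma} without writing out the expansion, and your argument (push $\Phi$ through the nesting, absorb the bounded pieces $p_1,\Phi(p_2),\Phi^2(p_3),\Phi^2(s_3),\Phi(s_2),s_1$ and the bounded parts of $t_1,t_2$ into $\pi,\sigma_1,\sigma_2,\sigma_3$, leaving the long blocks $\Phi(\mathcal{L}^j)$ and $\mathcal{S}^k$ in the correct order) is precisely the routine bookkeeping being left to the reader. The case analysis of the shapes of $t_1$ and $t_2$ and the bounded expansion factors of $\Phi$ and $\Phi^2$ are handled correctly, so there is nothing to add.
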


\begin{lemma}\label{suitability}Suppose that $\langle\mathcal{S},\mathcal{L}\rangle$ is suitable, and $|h(\mathcal{S}|$ is odd, $|h(\mathcal{L}|$ even. Let $$\Sigma=(\mathcal{S}\mathcal{L}\mathcal{S}\mathcal{L})^{-1}\Phi^3
(\mathcal{S})\mathcal{S}\mathcal{L}\mathcal{S}\mathcal{L},\mbox{ }\Lambda=(\mathcal{S}\mathcal{L}\mathcal{S}\mathcal{L})^{-1}\Phi^3
(\mathcal{L})\mathcal{S}\mathcal{L}\mathcal{S}\mathcal{L}.$$ Then $\langle\Sigma,\Lambda\rangle$ is suitable, and $|h(\Sigma)|$ is odd, $|h(\Lambda)|$ even.
\end{lemma}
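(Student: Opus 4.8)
The plan is to verify directly that $\langle\Sigma,\Lambda\rangle$ satisfies the two numbered conditions in the definition of suitability, using the fact that $\langle\mathcal{S},\mathcal{L}\rangle$ is already suitable and the algebra of the conjugacy $\Phi = \mathcal{L}^{-1}D^2\mathcal{L}$. First I would set $q = \mathcal{S}\mathcal{L}\mathcal{S}\mathcal{L}$, so that $\Sigma = q^{-1}\Phi^3(\mathcal{S})q$ and $\Lambda = q^{-1}\Phi^3(\mathcal{L})q$; note that $\Sigma$ and $\Lambda$ are genuine words of $\{\mathcal{S},\mathcal{L}\}^*$ precisely because $q$ is a prefix of both $\Phi^3(\mathcal{S})$ and $\Phi^3(\mathcal{L})$ (which follows since $\mathcal{S}\mathcal{L}$ is a prefix of $\Phi(\mathcal{S})$, hence $q=\Phi(\mathcal{S}\mathcal{L})$ is a prefix of $\Phi^2(\mathcal{S})$, which in turn is a prefix of $\Phi^3(\mathcal{S})$ and $\Phi^3(\mathcal{L})$). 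This conjugation is the whole point: it shifts the ``phase'' so that $h$ behaves correctly. For condition (1), I would compute $\|\Sigma\| = \|\Phi^3(\mathcal{S})\| = \mathcal{F}_6$ and $\|\Lambda\| = \|\Phi^3(\mathcal{L})\| = \mathcal{F}_7$, and check the parity of these Fibonacci numbers is odd in both cases; since $\|\cdot\|$ is conjugation-invariant this is immediate.

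The substantive work is condition (2): producing non-empty $\ell', \mu', p'$ with $h(\Lambda) = \ell'(\ell')^R$, $h(\Sigma) = \ell'\mu' = (\mu')^R(\ell')^R$, and $h(\Lambda) = \ell'\mu'\overline{(\mu')^R}p'$. Here I would use the sensitivity of $h$ to parity (the remark before Lemma~\ref{properties of h}): conjugating $\Phi^3(\mathcal{S})$ by $q$ on the left replaces $h(\Phi^3(\mathcal{S}))$ by $\overline{h(q)}^{-1}h(\Phi^3(\mathcal{S})q)$ or $h(q)^{-1}h(\Phi^3(\mathcal{S})q)$ depending on $\|q\| \bmod 2$; since $\|q\| = 4$ is even, one finds $h(\Sigma) = h(q)^{-1}h(\Phi^3(\mathcal{S}))\,(\text{tail})$ after tracking how $h$ acts across the splice. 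The cleanest route is to observe that suitability is essentially a statement about $h$ restricted to the sublanguage $\{\mathcal{S},\mathcal{L}\}^*$, and that $\Phi^3$ together with the conjugation realizes a ``renormalization'': by Lemma~\ref{properties of h and phi}, $h\circ\Phi$ and $h\circ\Phi^2$ inherit the prefix/suffix/complementation structure that defines suitability. I would define $\ell'$, $\mu'$, $p'$ explicitly as $h$-images of the appropriate $\Phi$-iterates of the original $\ell,\mu,p$-related words — concretely, since $h(\mathcal{L}) = \ell\ell^R$ and $\Phi^3(\mathcal{L})$ begins with $\Phi^3(\mathcal{S})$ and ends with a reflected copy, the palindromic-type identities $h(\mathcal{L}) = p^R\overline{\mu}\ell\mu = \mu^R\ell^R\overline{\mu^R}p$ from Lemma~\ref{properties of h} should propagate through $\Phi^3$.

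The main obstacle I expect is bookkeeping the complementations: every time a factor crosses an odd-length boundary under $h$, a bar appears, and $\Phi^3(\mathcal{S})$, $\Phi^3(\mathcal{L})$ contain many such boundaries, so I must track which sub-blocks get complemented and confirm that the three required equations close up consistently (in particular that the $\mu' \leftrightarrow (\mu')^R$ and the $\overline{(\mu')^R}$ match). The hypothesis that $|h(\mathcal{S})|$ is odd and $|h(\mathcal{L})|$ is even is exactly what makes the parity arithmetic deterministic, and I would verify at the end that $|h(\Sigma)| = |h(q)^{-1}| + |h(\Phi^3(\mathcal{S}))| + \cdots$ comes out odd and $|h(\Lambda)|$ even, so the induction hypothesis is preserved for any future iteration. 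Once the three words $\ell',\mu',p'$ are exhibited and shown non-empty (non-emptiness follows since $\mathcal{S},\mathcal{L}$ and hence all their $\Phi$-iterates are non-empty and $h$ never produces $\epsilon$), conditions 2(a)--(c) are a finite symbolic verification analogous to the computation $h(S) = \ell\mu$ with $\mu = 0$, $\ell = 0010$, $p = 00$ that was checked for $\langle S,L\rangle$ itself.
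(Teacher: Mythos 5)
Your proposal sets up the right framework (checking oddness of $\|\Sigma\|=\mathcal{F}_6$, $\|\Lambda\|=\mathcal{F}_7$, and noting that $q=\mathcal{S}\mathcal{L}\mathcal{S}\mathcal{L}$ is a prefix of $\Phi^3(\mathcal{S})$ and $\Phi^3(\mathcal{L})$ so that the conjugates are genuine words), and your intended strategy is the same as the paper's: verify conditions 2(a)--(c) directly for explicit witnesses. But the proof has a genuine gap: you never actually exhibit the witnesses $\ell',\mu',p'$, and you explicitly defer the complementation bookkeeping (``the main obstacle I expect\ldots''), which is precisely the entire content of the lemma. The paper's proof consists of writing $\Sigma=\mathcal{L}\mathcal{S}\mathcal{L}\mathcal{S}\mathcal{L}\mathcal{L}\mathcal{S}\mathcal{L}\mathcal{L}\mathcal{S}\mathcal{L}\mathcal{S}\mathcal{L}$ and $\Lambda$ out in full, setting $\hat\ell=h(\mathcal{L}\mathcal{S}\mathcal{L}\mathcal{S}\mathcal{L}\mathcal{L}\mathcal{S}\mathcal{L}\mathcal{L}\mathcal{S})\ell$, $\hat\mu=\ell^R\,\overline{h(\mathcal{S}\mathcal{L})}$, $\hat p=\overline{\hat\ell^R}\,h(\mathcal{L}\mathcal{S}\mathcal{L}\mathcal{S}\mathcal{L})$, and checking the four factorizations $h(\Sigma)=\hat\ell\hat\mu=\hat\mu^R\hat\ell^R$ and $h(\Lambda)=\hat\ell\hat\ell^R=\hat\ell\hat\mu\overline{\hat\mu^R}\hat p$ by block-by-block computation. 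In particular, the second equality in condition 2(b) is not a routine prefix/suffix statement: the paper needs the identity $\overline{h(z^R)}=(h(z))^R$ for $|z|$ even (together with $h(\mathcal{L})=\ell\ell^R$ sitting at an even block position, which is exactly why the conjugating word $q$ has length $4$), and your sketch supplies no mechanism for obtaining the reversed factorizations at all. Appealing to Lemma~\ref{properties of h and phi} does not substitute for this, since that lemma gives prefix/suffix relations, not the palindromic splittings required by the definition of suitability.

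Two smaller points. First, your justification that $q$ is a prefix is garbled: $\Phi(\mathcal{S}\mathcal{L})=\mathcal{S}\mathcal{L}\mathcal{S}\mathcal{L}\mathcal{L}\neq q$; what is true (and suffices) is that $q$ is a prefix of $\Phi^2(\mathcal{S})=\mathcal{S}\mathcal{L}\mathcal{S}\mathcal{L}\mathcal{L}$, hence of $\Phi^3(\mathcal{S})$ and $\Phi^3(\mathcal{L})$. Second, the parity claims $|h(\Sigma)|$ odd, $|h(\Lambda)|$ even are easy (count occurrences of $\mathcal{S}$ and $\mathcal{L}$ in $\Sigma$ and $\Lambda$, which conjugation does not change, and use $|h(\mathcal{S})|$ odd, $|h(\mathcal{L})|$ even), but you again only promise to ``verify at the end'' rather than doing it. As it stands the proposal is a plan for the paper's argument, not a proof of the lemma.
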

\begin{proof}
Each of $|\Sigma|$, $|\Lambda|$ is odd.
 Let $$\hat{\ell}=h(\mathcal{L}\mathcal{S}\mathcal{L}\mathcal{S}
\mathcal{L}\mathcal{L}\mathcal{S}\mathcal{L}\mathcal{L}\mathcal{S})
\ell,\hspace{.1in}\hat{\mu}=\ell^R\overline{h(\mathcal{S}\mathcal{L})},\hspace{.1in}\hat{p}=\overline{\hat{\ell}^R}h(\mathcal{L}
\mathcal{S}\mathcal{L}\mathcal{S}\mathcal{L})$$

\begin{eqnarray*}
h(\Sigma)&=&h((\mathcal{S}\mathcal{L}\mathcal{S}\mathcal{L})^{-1}\Phi^3
(\mathcal{S})\mathcal{S}\mathcal{L}\mathcal{S}\mathcal{L})\\
&=&h((\mathcal{S}\mathcal{L}\mathcal{S}\mathcal{L})^{-1}
\mathcal{S}\mathcal{L}\mathcal{S}\mathcal{L}\mathcal{L}
\mathcal{S}\mathcal{L}\mathcal{S}\mathcal{L}\mathcal{L}\mathcal{S}\mathcal{L}\mathcal{L}
\mathcal{S}\mathcal{L}\mathcal{S}\mathcal{L})\\
&=&
h(\mathcal{L}\mathcal{S}\mathcal{L}\mathcal{S}
\mathcal{L}\mathcal{L}\mathcal{S}\mathcal{L}\mathcal{L}\mathcal{S}
\mathcal{L}\mathcal{S}\mathcal{L})\\
&=&
h(\mathcal{L}\mathcal{S}\mathcal{L}\mathcal{S}
\mathcal{L}\mathcal{L}\mathcal{S}\mathcal{L}\mathcal{L}\mathcal{S})
\ell\ell^R
\overline{h(\mathcal{S}\mathcal{L})}\\
&=&\hat{\ell}\hat{\mu}
\end{eqnarray*}
For a word $z\in\{\mathcal{S},\mathcal{L}\}^*$ with $|z|$ even, we observe that $\overline{h(z^R)}=(h(z))^R$.  Therefore, we also have
\begin{eqnarray*}
\Sigma&=&
h(\mathcal{L}\mathcal{S}\mathcal{L}\mathcal{S}
\mathcal{L}\mathcal{L}\mathcal{S}\mathcal{L}\mathcal{L}\mathcal{S}
\mathcal{L}\mathcal{S}\mathcal{L})\\
&=&h(\mathcal{L}\mathcal{S})h(\mathcal{L})\overline{h(\mathcal{S}
\mathcal{L}\mathcal{L}\mathcal{S}\mathcal{L}\mathcal{L}\mathcal{S}
\mathcal{L}\mathcal{S}\mathcal{L})}\\
&=&
(\overline{h(\mathcal{S}\mathcal{L})})^R\ell\ell^R\left(h(\mathcal{L}\mathcal{S}
\mathcal{L}\mathcal{S}\mathcal{L}\mathcal{L}\mathcal{S}\mathcal{L}
\mathcal{L}\mathcal{S})\right)^R\\
&=&\hat{\mu}^R\hat{\ell}^R
\end{eqnarray*}

Further,

\begin{eqnarray*}
h(\Lambda)&=&h((\mathcal{S}\mathcal{L}\mathcal{S}\mathcal{L})^{-1}\Phi^3
(\mathcal{L})\mathcal{S}\mathcal{L}\mathcal{S}\mathcal{L})\\
&=&h((\mathcal{S}\mathcal{L}\mathcal{S}\mathcal{L})^{-1}
\mathcal{S}\mathcal{L}\mathcal{S}\mathcal{L}\mathcal{L}
\mathcal{S}\mathcal{L}\mathcal{S}\mathcal{L}\mathcal{L}\mathcal{S}\mathcal{L}\mathcal{L}
\mathcal{S}\mathcal{L}\mathcal{S}\mathcal{L}\mathcal{L}\mathcal{S}\mathcal{L}\mathcal{L}
\mathcal{S}\mathcal{L}\mathcal{S}\mathcal{L})\\
&=&h(
\mathcal{L}\mathcal{S}\mathcal{L}\mathcal{S}\mathcal{L}\mathcal{L}\mathcal{S}\mathcal{L}\mathcal{L}
\mathcal{S}
\mathcal{L}
\mathcal{S}\mathcal{L}\mathcal{L}\mathcal{S}\mathcal{L}\mathcal{L}\mathcal{S}\mathcal{L}\mathcal{S}
\mathcal{L})\\
&=&h(\mathcal{L}\mathcal{S}\mathcal{L}\mathcal{S}
\mathcal{L}\mathcal{L}\mathcal{S}\mathcal{L}\mathcal{L}\mathcal{S})
h(\mathcal{L})\overline{h(\mathcal{S}\mathcal{L}\mathcal{L}\mathcal{S}\mathcal{L}\mathcal{L}
\mathcal{S}\mathcal{L}\mathcal{S}\mathcal{L})}\\
&=&h(\mathcal{L}\mathcal{S}\mathcal{L}\mathcal{S}
\mathcal{L}\mathcal{L}\mathcal{S}\mathcal{L}\mathcal{L}\mathcal{S})
\ell\hspace{.1in}\ell^R
\left(h(\mathcal{L}\mathcal{S}\mathcal{L}\mathcal{S}
\mathcal{L}\mathcal{L}\mathcal{S}\mathcal{L}\mathcal{L}\mathcal{S})\right)^R\\
&=&\hat{\ell}\hat{\ell}^R
\end{eqnarray*}
Finally,
\begin{eqnarray*}
h(\Lambda)
&=&h(\mathcal{L}\mathcal{S}\mathcal{L}\mathcal{S}
\mathcal{L}\mathcal{L}\mathcal{S}\mathcal{L}\mathcal{L}\mathcal{S})
\ell\hspace{.1in}\ell^R
\overline{h(\mathcal{S}\mathcal{L}\mathcal{L}\mathcal{S}\mathcal{L}\mathcal{L}
\mathcal{S}\mathcal{L}\mathcal{S}\mathcal{L})}\\
&=&h(\mathcal{L}\mathcal{S}\mathcal{L}\mathcal{S}
\mathcal{L}\mathcal{L}\mathcal{S}\mathcal{L}\mathcal{L}\mathcal{S})
\ell\hspace{.1in}\ell^R
\overline{h(\mathcal{S}\mathcal{L})}\hspace{.05in}\overline{h(\mathcal{L}\mathcal{S})}\hspace{.05in}\overline{h(\mathcal{L})}h(\mathcal{L}
\mathcal{S}\mathcal{L}\mathcal{S}\mathcal{L})\\
&=&h(\mathcal{L}\mathcal{S}\mathcal{L}\mathcal{S}
\mathcal{L}\mathcal{L}\mathcal{S}\mathcal{L}\mathcal{L}\mathcal{S})
\ell\hspace{.1in}\ell^R
\overline{h(\mathcal{S}\mathcal{L})}\hspace{.05in}\overline{h(\mathcal{L}\mathcal{S})}
\hspace{.05in}\overline{\hat{\ell}}\hspace{.05in}\overline{\hat{\ell}^R}h(\mathcal{L}
\mathcal{S}\mathcal{L}\mathcal{S}\mathcal{L})\\
&=&\hat{\ell}\hat{\mu}\overline{\hat{\mu}^R}\hat{p}.
\end{eqnarray*}
\end{proof}

This result combines with Corollary~\ref{cor} to allow us to parse words of $\mathcal{M}$. Let $L_0=L$, $S_0=S$. Supposing that $\langle S_i,L_i\rangle$ is suitable, let $\mathcal{L}=L_i$, $\mathcal{S}=S_i$, and 
$$L_{i+1}=(S_iL_iS_iL_i)^{-1}\Phi^3(L_i)S_iL_iS_iL_i,\hspace{.1in}S_{i+1}=(S_iL_iS_iL_i)^{-1}\Phi^3(L_i)S_iL_iS_iL_i.$$
Since $\langle S,L\rangle$ is suitable, all of the pairs $\langle S_i,L_i\rangle$ will be suitable by Lemma~\ref{suitability}. Suppose $y\in\{S,L\}^*\cap\mathcal{M}$. By repeatedly applying Corollary~\ref{cor},
we write $y=\hat{\pi}\upsilon\hat{\sigma}$ where $\upsilon\in\{S_i,L_i\}^*$.

\section{Upper bound on growth rate}

Define
\[
\mathcal{N} = \{z \in \{0,1\}^* : z \text{ avoids } xxx^R\}.
\]

\begin{theorem}\label{poly_bin}
The number of words in $\mathcal{N}$ of length $n$ is $O(n^{\lg n+c})$, some constant $c$.
\end{theorem}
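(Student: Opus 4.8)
The plan is to bound $|\mathcal N\cap\{0,1\}^n|$ by reducing, in two cheap steps, to counting the words of $\mathcal M$ of bounded length, and then exploiting the hierarchical parsing of Section~\ref{hierarchy}. For the first step, a word $z\in\mathcal N$ contains neither $000$ nor $111$, so its initial and final runs have length at most $2$; since complementation preserves $\mathcal N$ (because $\overline{xxx^R}=\overline x\,\overline x\,(\overline x)^R$), after possibly complementing we may assume $z$ begins with $0$, and deleting the (at most two) trailing $0$'s, if any, produces a word of $\mathcal K$ (prefixes of $xxx^R$-avoiding words avoid $xxx^R$) of length in $\{n-2,n-1,n\}$; hence $|\mathcal N\cap\{0,1\}^n|=O\big(\sum_{m=n-2}^n|\mathcal K\cap\{0,1\}^m|\big)$. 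For the second step, Theorem~\ref{binary to h} writes each $z\in\mathcal K$ of length $m$ as $p\,h(u)\,s\,t$ with $|p|,|s|\le C$, $u\in\mathcal M$, $t\in(\epsilon+1)(01)^*(\epsilon+1)$; there are $O(1)$ choices for $p$ and for $s$, for each of these and each $u$ the length of $t$ is forced and at most four words $t$ have that length, and $|h(u)|\ge 5|u|$ gives $|u|\le m$. Writing $f(N)=\#\{u\in\mathcal M:|u|\le N\}$ (lengths of $u$ taken over the alphabet $\{S,L\}$), we obtain $|\mathcal K\cap\{0,1\}^m|=O(f(m))$, so it remains to prove $f(N)=O(N^{\lg N+c})$.

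To do this I will iterate Corollary~\ref{cor}. First, $\mathcal M$ is closed under factors: if $v$ is a factor of $u\in\mathcal M$ and $v'$ is the suffix of $u$ beginning where $v$ does, then $h(v)$ is a prefix of $h(v')$, and $h(v')$ or $\overline{h(v')}$ is a suffix of $h(u)$, so $h(v)$ is a factor of $h(u)$ or of $\overline{h(u)}$, and both of these avoid $xxx^R$. Next, $\langle S,L\rangle$ is suitable with $|h(S)|=5$ odd and $|h(L)|=8$ even, so by Lemma~\ref{suitability} every pair $\langle S_i,L_i\rangle$ in the hierarchy is suitable with $|h(S_i)|$ odd and $|h(L_i)|$ even; hence Corollary~\ref{cor} applies at every level. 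Let $B=\varphi^6$ where $\varphi=(1+\sqrt 5)/2$; since the passage $\langle S_i,L_i\rangle\mapsto\langle S_{i+1},L_{i+1}\rangle$ has incidence matrix $\left(\begin{smallmatrix}5&8\\8&13\end{smallmatrix}\right)$, whose dominant eigenvalue is $B$, a level-$i$ block has length $\Theta(B^i)$ as a word over $\{S,L\}$. Let $F_i(N)$ be the number of words over $\{S_i,L_i\}$ that lie in $\mathcal M$ and have length at most $N$ over $\{S,L\}$, so $f=F_0$. Applying Corollary~\ref{cor} at level $i$ exhibits such a word as $\pi\,\Phi^3(y')\,\sigma$ with $\sigma=\sigma_1\Phi(L_i^{\,j})\sigma_2 S_i^{\,k}\sigma_3$ and $|\pi\sigma_1\sigma_2\sigma_3|\le\kappa$; re-encoding $\Phi^3(y')$ over $\{S_{i+1},L_{i+1}\}$ via the conjugacies defining $S_{i+1},L_{i+1}$ (as set up in Section~\ref{hierarchy}) introduces only a bounded boundary word, and the re-encoded word again lies in $\mathcal M$ (being a factor of $y$). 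Thus the word is determined by a bounded amount of data ($O(1)$ choices, uniformly in $i$), the two exponents $j,k$ (each at most $N$, since $\Phi(L_i^{\,j})$ and $S_i^{\,k}$ have $\{S,L\}$-length $\ge j$, $\ge k$), and a word over $\{S_{i+1},L_{i+1}\}$ in $\mathcal M$ of length at most $N$; that is,
\[
F_i(N)\ \le\ C_1(N+1)^2\,F_{i+1}(N)
\]
for a universal constant $C_1$, while $F_i(N)=1$ once the minimal length of a level-$i$ block exceeds $N$, i.e. once $i>\log_B N+O(1)$.

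Taking $r=\log_B N+O(1)$ and unrolling the recursion gives
\[
f(N)=F_0(N)\ \le\ \bigl(C_1(N+1)^2\bigr)^{\log_B N+O(1)}\ =\ N^{\frac{2}{\lg B}\lg N+O(1)}.
\]
Since $\lg B=\lg(\varphi^6)=6\lg\varphi>2$, the coefficient $2/\lg B$ is strictly less than $1$, so $f(N)=O(N^{\lg N+c})$; combining with the two reductions above yields $|\mathcal N\cap\{0,1\}^n|=O(n^{\lg n+c})$.

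The step I expect to be the main obstacle is making the iteration of Corollary~\ref{cor} watertight: checking that the core produced at each step is again a factor of a word of $\mathcal M$, that suitability together with the parity conditions propagates up the hierarchy through Lemma~\ref{suitability}, and that the change of alphabet between consecutive levels costs only a bounded boundary word. Once that is in place, the counting is routine: the only unbounded source of multiplicity at each level is the pair of exponents $j,k$, contributing the factor $(N+1)^2$, and this compounds over the $\Theta(\log n)$ levels of the hierarchy into exactly an $n^{\Theta(\log n)}$ bound, with the numerical fact $2/\lg(\varphi^6)<1$ ensuring that the exponent is at most $\lg n+c$.
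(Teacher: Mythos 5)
Your proposal is correct in substance and follows the paper's structural skeleton — reduce $\mathcal{N}$ to $\mathcal{K}$ to $\mathcal{M}$ via Theorem~\ref{binary to h}, then iterate Corollary~\ref{cor} up the hierarchy of suitable pairs supplied by Lemma~\ref{suitability} — but you finish the count by a genuinely different route. The paper keeps the nested decomposition global: it observes that the exponents satisfy $\sum_i\bigl(j_i\mathcal{F}_{6i-3}+k_i\mathcal{F}_{6i-6}\bigr)\le n$, so the number of exponent sequences is at most the number of partitions of $n$ into parts $\mathcal{F}_{3i}\ge 2^i$, hence at most the number of binary partitions, and then invokes Mahler's asymptotic $p(n,2)=n^{\lg n(1+o(1))}$; the boundary words contribute only a polynomial factor since $m\le(\lg n)/3$. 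You instead bound each level crudely by $C_1(N+1)^2$ (the two exponents) times $O(1)$ boundary data and multiply over the $\log_B N+O(1)$ levels, using $B=\varphi^6$ (your incidence-matrix computation and the conclusion $2/\lg B<1$ are correct), which yields $N^{(2/\lg B)\lg N+O(1)}$ — an elementary, Mahler-free finish that even gives a smaller coefficient on $\lg N$, at the cost of losing the tight connection to the partition-counting form that mirrors the paper's lower bound. You also supply the $\mathcal{N}$-to-$\mathcal{K}$ reduction (complementation plus trimming the final run), which the paper asserts without proof; that step is fine. One small point to tighten, which you yourself flagged as the delicate spot: the re-encoded core is $\tau_i(y_3)=(S_iL_iS_iL_i)^{-1}\Phi_i^3(y_3)(S_iL_iS_iL_i)$, which is not literally a factor of the level-$i$ word (the conjugation appends the block $S_iL_iS_iL_i$), so ``being a factor of $y$'' does not quite justify membership in $\mathcal{M}$ as written; the standard repair is to drop the last letter of $y_3$ into the bounded boundary data, after which $\tau_i$ of the truncated word is a genuine factor, lies in $\mathcal{M}$ by the factor-closure argument you gave, and reconstruction still works with $O(1)$ extra choices. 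The paper is equally terse at this point, so this is a patch rather than a change of approach, and with it your argument goes through.
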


To prove this theorem, it suffices to show that the number of words in $\mathcal{K}$ of length $n$ is $O(n^{\lg n+c})$, some constant $c$.

From Theorem 1, it suffices to prove the following:

\begin{theorem}\label{poly_SL}
The number of words in $\mathcal{M}$ of length $n$ is $O(n^{\lg n+c})$, some constant $c$.
\end{theorem}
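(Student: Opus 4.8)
The plan is to iterate the parsing machinery developed in Section~\ref{hierarchy} and count the resulting pieces. By Corollary~\ref{cor}, any $y \in \{\mathcal{L},\mathcal{S}\}^* \cap \mathcal{M}$ can be written $y = \pi \Phi^3(y_3) \sigma$ where the ``boundary'' part $\pi\sigma$ has bounded length (at most $\kappa$, independent of $y$), and moreover $\sigma$ has the shape $\sigma_1 \Phi(\mathcal{L}^j)\sigma_2 \mathcal{S}^k \sigma_3$ with $|\pi\sigma_1\sigma_2\sigma_3| \le \kappa$ — so apart from a bounded-length ``junk'' prefix and suffix, the only unbounded freedom in the boundary is the two integers $j,k$, each of which is at most (length of $y$), hence contributes only a polynomial (in fact $O(n^2)$) factor. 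Now observe that $\Phi^3$ plays exactly the role of the block-substitution in the Restivo--Salemi argument: $\|\Phi^3(\mathcal{S})\| = \mathcal{F}_6$ and $\|\Phi^3(\mathcal{L})\| = \mathcal{F}_7$ grow like a fixed constant $> 1$, so $\|y_3\| = \Theta(\|y\|)$ with a fixed shrink factor $\rho < 1$. Crucially, by Lemma~\ref{suitability} the pair $\langle \mathcal{S},\mathcal{L}\rangle$ is replaced by a new suitable pair $\langle\Sigma,\Lambda\rangle$ (with the same parity conditions), so after passing to $\langle S_1,L_1\rangle$ we may re-apply Corollary~\ref{cor} to $y_3$, and so on. (Note the remark at the end of Section~\ref{hierarchy}: since $\langle S,L\rangle$ is suitable to start, all $\langle S_i,L_i\rangle$ are suitable, so the recursion is legitimate.)

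The counting then goes as follows. Fix $n$ and let $y \in \mathcal{M}$ have length $n$ (as a binary word, or equivalently bounded length as an $\{S,L\}$-word — these differ by a bounded factor since each block of $h$ has length in $\{5,8\}$, and each $\Phi^k$-image has length comparable to a Fibonacci number). Iterating the decomposition $t$ times expresses $y$ as a product of: $2t$ bounded-length pieces $\hat\pi_1,\ldots,\hat\pi_t,\hat\sigma_1,\ldots,\hat\sigma_t$ (each chosen from a fixed finite set, so $2t$ choices of bounded arity); $O(t)$ integer exponents $j_i,k_i$, each at most $n$ (contributing $n^{O(t)}$); and a final core word $\upsilon_t \in \{S_t,L_t\}^*$ whose length in the $\{S_t,L_t\}$-alphabet is $O(\rho^{\,t} n)$. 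Choosing $t = t(n)$ so that $\rho^{\,t} n$ is bounded by a constant — that is, $t \sim \log_{1/\rho} n = \Theta(\lg n)$ — makes $\upsilon_t$ range over a fixed finite set, contributing $O(1)$. Hence the total count is at most $\big(\text{const}\big)^{2t} \cdot n^{O(t)} \cdot O(1) = n^{O(\lg n)} = O(n^{c'\lg n})$ for a suitable constant $c'$; writing this as $O(n^{\lg n + c})$ after absorbing constants (or, if $c'>1$, noting $n^{c'\lg n} = n^{\lg(n^{c'})} \le n^{\lg n + c}$ fails — so one must verify $c' \le 1$, i.e. that each level multiplies the count by at most $2^{\lg n} = n$ asymptotically, which is where the precise arity of the finite choice sets and the Fibonacci growth rate must be matched up carefully).

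The main obstacle I anticipate is precisely that last bookkeeping: getting the exponent of $n$ to be $\lg n$ (coefficient $1$) rather than merely $O(\lg n)$. This requires that the ``branching'' introduced at each of the $\Theta(\lg n)$ levels — the number of bounded pieces $\hat\pi_i,\hat\sigma_i$ and exponent-pairs — multiplies the running count by a factor that is, per level and on average over the $\lg n$ levels, at most $n^{1/\lg n + o(1/\lg n)} = 2^{1+o(1)}$; equivalently the whole iteration must contribute a factor $n^{1+o(1)}$ per ``halving'' of the scale. Concretely this forces one to (i) pin down the exact growth constant $\varphi^6$ (where $\varphi$ is the golden ratio) governing $\|\Phi^3\|$ so that the number of levels is $\lg n / \lg(\varphi^6) + O(1)$ rather than just $\Theta(\lg n)$, and (ii) check that the finitely many bounded pieces and the polynomially-many exponent choices at each level together fit inside the budget $n^{\lg(\varphi^6)/\lg n \cdot \lg n} = n^{\lg(\varphi^6)}$... — at which point one realizes the clean statement must instead read $O(n^{\alpha \lg n + c})$ with $\alpha = 1/\lg\varphi$ or similar, and matching this to the asserted $n^{\lg n + c}$ form is the delicate part. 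I would therefore structure the proof to first get $O(n^{\alpha\lg n+c})$ by the crude iteration, then sharpen the constant by a more careful amortized count of choices across levels, exactly as Restivo--Salemi do for overlap-free words, and finally reconcile with the exact exponent claimed.
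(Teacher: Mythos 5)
Your decomposition is exactly the paper's: iterate Corollary~\ref{cor}, invoking Lemma~\ref{suitability} so that each pair $\langle S_i,L_i\rangle$ stays suitable, and count (bounded boundary pieces) $\times$ (exponents $j_i,k_i$) $\times$ (a constant-size core). The genuine gap is in the one quantitatively delicate step, which you explicitly leave open: bounding the number of choices of the exponents. Your crude count ($n^{O(1)}$ choices per level over $\Theta(\lg n)$ levels) yields only $n^{O(\lg n)}$ with an uncontrolled constant in the exponent, and you end by conceding you cannot certify that the coefficient of $\lg n$ is at most $1$ --- indeed you suggest the statement might have to be weakened to $O(n^{\alpha\lg n+c})$ with $\alpha=1/\lg\varphi>1$. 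Since $n^{c'\lg n}$ is \emph{not} $O(n^{\lg n+c})$ when $c'>1$, this unresolved constant is precisely what the theorem requires, so as written the proposal does not prove the stated bound.

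The idea you are missing (the paper's key step) is not to count each $j_i,k_i$ independently as being at most $n$, but to use the length constraint that weights the exponents by Fibonacci numbers growing geometrically with the level:
$n\ \ge\ \sum_{i=1}^m\bigl(j_i\mathcal{F}_{6i-3}+k_i\mathcal{F}_{6i-6}\bigr)$.
Thus a choice of all the exponents corresponds to a partition of an integer at most $n$ into parts from $\{\mathcal{F}_{3i}\}_{i\ge 0}$, and since $\mathcal{F}_{3i}\ge 2^i$ this is bounded by the number of partitions of $n$ into powers of $2$, which by Mahler's theorem is about $n^{\lg n}$; combined with the polynomially many choices of the bounded pieces $p_i,\sigma_{1,i},\sigma_{2,i},\sigma_{3,i}$ this gives $O(n^{\lg n+c})$ directly. (Alternatively, your crude route could be salvaged, but only by actually pinning the constants you wave at: the number of levels is at most $\lg n/\lg\mathcal{F}_6\le(\lg n)/3$ because $\mathcal{F}_6=13>8$, and each level contributes at most $O(n^2)$ exponent choices, so the coefficient is at most $2/\lg 13<1$; your fear that it exceeds $1$, and the guess $1/\lg\varphi$, come from conflating the per-level scale factor $\varphi^6$ with $\varphi$.) Without carrying out one of these two computations, the argument is incomplete at exactly the point that matters.
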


\begin{proof}[Proof of Theorem~\ref{poly_SL}]
 Let $y \in \mathcal{M}$ have length $n$. Choose $\langle \mathcal{S},\mathcal{L}\rangle=\langle S,L\rangle$. Then iteration of
Corollary~\ref{cor} gives
\[
y = p_1 \Phi^3( p_2 \Phi^3( p_3 \cdots p_m \Phi^3(\epsilon) s_m
\cdots s_3) s_2 ) s_1,
\]
where $m \leq (\lg n)/3$.  For $i \in \{1,\cdots,m\}$ we have
\[
s_i = 
\sigma_{1,i}\Phi(\mathcal{L}^{j_i})\sigma_{2,i}\mathcal{S}^{k_i}\sigma_{3,i}
\]
Since $|p_i\sigma_{3,i}\sigma_{2,i}\sigma_{1,i}| \leq \kappa$, there is a
constant $\alpha$ such that there are at most $\alpha$ choices for $(p_i,
\sigma_{i,3},\sigma_{i,2},\sigma_{i,1})$. This gives a number of choices for $\{ (p_i,
\sigma_{i,3},\sigma_{i,2},\sigma_{i,1})\}_{i=1}^m$ which is polynomial in $n$.

This leaves the problem of 
bounding the number of choices of the $j_i$ and $k_i$. 

We have
\begin{eqnarray*}
n&\ge& |\Phi^3(\Phi^3(\cdots \Phi^3(\epsilon) \Phi(\mathcal{L}^{j_m})\mathcal{S}^{k_m}
\cdots \Phi(\mathcal{L}^{j_3})\mathcal{S}^{k_3}) \Phi(\mathcal{L}^{j_2})\mathcal{S}^{k_2} ) \Phi(\mathcal{L}^{j_1})\mathcal{S}^{k_1}|\\
&=&\sum_{i=1}^m\left(j_i|\Phi^{3i-2}(\mathcal{L})|+k_i|\Phi^{3i-3}(\mathcal{S})|\right)\\
&=&\sum_{i=1}^m\left(j_i\mathcal{F}_{6i-3}+k_i\mathcal{F}_{6i-6}\right)
\end{eqnarray*}

It follows that the number of choices for the $j_i$, $k_i$ is less
than or equal to the number of partitions (with repetition) of $n$ with parts chosen
from $\{\mathcal{F}_{3i}\}_{i=0}^\infty$. Since $\mathcal{F}_{3i}\ge 2^i$, this is less than or equal to the number of partitions of $n$ into 
powers of 2. Mahler \cite{mahler} showed that the number $p(n,r)$ of partitions of $n$ into powers of $r$ satisfies
$$\lg p(n,r)\sim \frac{\lg^2 n}{\lg^2 r};$$  
thus, $p(n,2)\sim Cn^{\lg n}$ where $C$ is constant.
The result follows.\end{proof}
\section{Lower bound on growth}
Let $\psi:\{L,S\}^*\rightarrow\{L,S\}^*$ be given by
 $$\psi(S)=LSL,\psi(L)=LSLSL.$$
Since $\psi(S)$, $\psi(L)$ are palindromes, we have 
$$\psi(u^R)=(\psi(u))^R, u\in\{L,S\}^*.$$
Letting $\langle \mathcal{S},\mathcal{L}\rangle=\langle S,L\rangle$, we find that
$\psi=(\mathcal{LSL})^{-1}D^3\mathcal{LSL}.$ It follows that $|\psi^k(S)|=\mathcal{F}_{3k}$, 
$|\psi^k(L)|=\mathcal{F}_{3k+1}$

Define languages $\mathscr{L}_i$ by 
$$\mathscr{L}_0=LS^*,\mathscr{L}_{i+1}=\psi(\mathscr{L}_i)LS^*.$$ 
Let $\mathscr{L}=\cup_{i=0}^\infty\mathscr{L}_i$. 

A word $w\in\mathscr{L}$ has the form
$$w=\psi(\psi(\cdots \psi(\psi(LS^{k_m})LS^{k_{m-1}})\cdots)LS^{k_2})LS^{k_1})LS^{k_0}$$
so that the number of words of $\mathscr{L}$ of length $n$ is the number of partitions of $n$ of the form

$$n=\sum_{i=0}^{m}\left(\mathcal{F}_{3i+1}+k_i\mathcal{F}_{3i}\right).$$
Since $\mathcal{F}_{i}\le 2^i$, this is greater than or equal to 
 the number of partitions of $n$ of the form

$$n=\sum_{i=0}^{m}\left( 2^{3i+1}+k_i2^{3i}\right),$$
which is greater than or equal to 
 the number of partitions of $n$ of the form

$$n=\sum_{i=0}^{m} (k_i+1)2^{3i+1}.$$

This, in turn, is at least half of
the number of partitions of $n$ of the form

$$n=\sum_{i=0}^{m}k_i2^{3i+1},$$

which is the number of partitions of $n/2$ of the form

$$n/2=\sum_{i=0}^{m} k_i8^{i}.$$

Following Mahler \cite{mahler}, this is
$p(n/2,8)\sim Cn^{\lg n}/n^2$, where $C$ is constant.
We will show that no word of $h(\mathscr{L})$ has a non-empty factor $xxx^R$, so that this gives a lower bound on $\mathcal{N}$.

One checks the following:
\begin{lemma}
No word of $\mathscr{L}$ has any of the following factors:
$$L^3, SSL, SLSLS, LSLSLLSLSLLSLSL=\psi(L^3), LLS L LSLLS L SL$$
$$LSLLSLSLLSLLSLSLLSL=\psi(SLSLS).
$$
\end{lemma}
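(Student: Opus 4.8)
The plan is to verify the claim by a structural analysis of the languages $\mathscr{L}_i$, proceeding by induction on $i$. First I would set up the base case: a word of $\mathscr{L}_0 = LS^*$ is just $LS^k$, which visibly contains none of the six listed factors (it has at most one $L$, so no $L^3$, no $SSL$, no $SLSLS$, and certainly none of the longer words). For the inductive step, suppose no word of $\mathscr{L}_i$ contains any of the forbidden factors, and let $w' = \psi(w)LS^k \in \mathscr{L}_{i+1}$ with $w \in \mathscr{L}_i$. Since $\psi(S) = LSL$ and $\psi(L) = LSLSL$ are both palindromes beginning and ending in $L$, and since the final block is $LS^k$, the word $w'$ is a concatenation of blocks from $\{LSL, LSLSL\}$ followed by $LS^k$; I would record that consecutive $\psi$-images share the boundary letter $L$, so $w'$ decomposes into maximal runs of $L$'s (of length $1$ or $2$ only, coming from the junctions $L\cdot L$ between adjacent blocks) separated by single $S$'s or the terminal $S^k$.

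The heart of the argument is then a finite case check that each forbidden factor, were it to occur in $w'$, would force a forbidden factor (or a structurally impossible configuration) in $w$. For the short factors this is immediate from the run structure just described: $L^3$ cannot occur because the only way to get three consecutive $L$'s would require $\psi$ of something containing $L$ adjacent to another such, but $\psi(L) = LSLSL$ and $\psi(S)=LSL$ never produce a run of length exceeding two at a junction; $SSL$ cannot occur because two consecutive $S$'s inside $\psi(w)$ are impossible (every $S$ in a $\psi$-image is flanked by $L$'s) and the terminal $S^k$ is not followed by an $L$; and $SLSLS$ would have to straddle block boundaries in a way that, tracing back through $\psi$, corresponds to the pattern $LL$ inside $w$ preceded and followed appropriately — I would check it forces $w$ to contain $SSL$ or $L^3$ or $SLSLS$. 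For the three long factors, I would use the fact that $LSLSLLSLSLLSLSL = \psi(LLL)$, that $LSLLSLSLLSLLSLSL$ and $LSLLSLSLLSLLSLSLLSL = \psi(SLSLS)$ (after aligning at block boundaries), so an occurrence of one of these in $w'$ pulls back under the injectivity of $\psi$ on aligned factors to an occurrence of $L^3$ or $SLSLS$ in $w$, contradicting the inductive hypothesis; the one remaining factor $LLS\,L\,LSLLS\,L\,SL$ I would handle by the same pull-back, identifying which preimage pattern it forces.

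The main obstacle I anticipate is the bookkeeping of \emph{alignment}: an occurrence of a forbidden factor in $w'$ need not begin at a block boundary of the $\psi$-decomposition, so before pulling back through $\psi$ one must show the occurrence can be slid to a boundary-aligned position, or else argue directly that a misaligned occurrence already contradicts the local run structure. Because $\psi(S)$ and $\psi(L)$ are palindromes with a small synchronizing property (every occurrence of $LL$ in a $\psi$-image sits exactly at a junction, and $S$'s alternate predictably with $L$-runs), I expect alignment to be forced in each case, but verifying this for the two length-$15$ and one length-$18$ factors is where the care is needed. Once alignment is established, the pull-backs and the base case finish the induction, and hence no word of $\mathscr{L}$ — being a union of the $\mathscr{L}_i$ — contains any listed factor.
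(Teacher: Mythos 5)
Your inductive set-up (base case $\mathscr{L}_0=LS^*$; runs of $L$ in $\psi(w)LS^k$ have length at most $2$ and every $LL$ marks a block junction; pull occurrences back through $\psi$) is the right machinery, and it does dispose of the first five factors: $L^3$, $SSL$ and $SLSLS$ are excluded by the local block structure alone (every $SLS$ is the middle of a $\psi(L)$-block and is followed by $LL$, so it does not pull back to anything --- it simply cannot extend to $SLSLS$), while the internal $LL$'s synchronize the decomposition so that $\psi(L^3)$ forces $LLL$ in $w$ and $LLSLLSLLSLSL$ forces $SSL$ in $w$. The genuine gap is the step you describe for the last factor, namely that an occurrence of $\psi(SLSLS)$ ``pulls back under the injectivity of $\psi$ on aligned factors to an occurrence of $SLSLS$''. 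The two extreme blocks of such an occurrence are \emph{not} synchronized: since $\psi(S)=LSL$ is both a prefix and a suffix of $\psi(L)=LSLSL$, the word $\psi(SLSLS)$ occurs inside $\psi(xLSLy)$ for every choice of letters $x,y\in\{S,L\}$, so the pull-back only yields some factor $xLSLy$ of $w$, not necessarily $SLSLS$. This is exactly the alignment issue you flagged as needing care, and the care reveals it cannot be fixed.

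Indeed, this item of the lemma is false, so no proof of it can succeed. Take $w_1=\psi(L)L=LSLSLL\in\mathscr{L}_1$, which contains $SLSLL$. Then $w_2=\psi(w_1)L\in\mathscr{L}_2$ contains $\psi(SLSLL)=\psi(SLSL)LSLSL$, whose prefix $\psi(SLSL)LSL$ is precisely $\psi(SLSLS)=LSLLSLSLLSLLSLSLLSL$; concretely, $\psi(LSLSLL)=LSLSL\,LSL\,LSLSL\,LSL\,LSLSL\,LSLSL$ contains the $19$-letter word at positions $6$ through $24$. So your induction cannot close for this factor. (The paper gives no proof of the lemma --- ``one checks'' --- and never actually invokes the $19$-letter factor in the proof of Theorem~\ref{psi(L)S*}; the five exclusions your argument does establish are the ones used there. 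Two small further points: the $16$-letter string $LSLLSLSLLSLLSLSL$ you mention is not one of the listed factors, and $SLSLS$ does not trace back to an $LL$ in $w$ as you suggest --- it is ruled out directly, with no appeal to the inductive hypothesis.)
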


\begin{theorem}\label{psi(L)S*}
No word of $h(\mathscr{L})$ contains a non-empty word of the form $xxx^R$.
\end{theorem}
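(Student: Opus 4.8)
The plan is to argue by contradiction. Suppose some $w\in\mathscr{L}$ has $h(w)$ containing a non-empty factor $xxx^R$; among all such pairs $(w,x)$ I would choose one with $|x|$ minimal and, subject to that, with $|w|$ minimal, and then derive a contradiction. First I would dispose of the base level: if $w\in\mathscr{L}_0=LS^*$, then $L$ sits at index $0$ and the following $S$'s alternate parity, so $h(w)$ is a prefix of the ultimately periodic word $00100100\,(11011\,00100)^{\omega}$, which is periodic of period $10$ after a prefix of length $8$; whether it contains a non-empty $xxx^R$ is thus decided by a finite inspection, which it fails, so $w\notin\mathscr{L}_0$.

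Hence $w\in\mathscr{L}_i$ for some $i\ge 1$, so $w=\psi(v)\,LS^{k_0}$ with $v\in\mathscr{L}_{i-1}\subseteq\mathscr{L}$. The decisive structural ingredient I would establish is a self-similarity of $h$ under $\psi$. Because $|\psi(S)|$ and $|\psi(L)|$ are odd, the composite $h\circ\psi$ is again a parity-sensitive transducer of exactly the shape of $h$: writing $P=h(\psi(S))=00100100\,11011\,00100100$ and $Q=h(\psi(L))=00100100\,11011\,00100100\,11011\,00100100$ (the first letter treated as even-indexed), one has $h(\psi(v))=h_1(v)$, where $h_1$ emits $P$ or $\overline{P}$ for $S$ and $Q$ or $\overline{Q}$ for $L$ according to the parity of the index in $v$. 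A short computation then shows $P,Q$ (hence $\overline{P},\overline{Q}$) are palindromes and, crucially, that $\langle S,L\rangle$ is a \emph{suitable} pair for $h_1$: with $\ell_1$ the length-$17$ prefix of $Q$, $\mu_1$ the length-$4$ suffix of $P$, and $p_1$ the length-$9$ suffix of $Q$, the identities $h_1(L)=\ell_1\ell_1^R$, $h_1(S)=\ell_1\mu_1=\mu_1^R\ell_1^R$ and $h_1(L)=\ell_1\mu_1\overline{\mu_1^R}p_1$ all hold, with $|h_1(S)|$ odd and $|h_1(L)|$ even. Since every lemma proved above from the suitability axioms alone — in particular Lemmas~\ref{properties of h}, \ref{properties of h and phi} and \ref{B-lemma} — holds verbatim with $h_1$ in place of $h$, this hands me the whole machinery one level down.

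Finally I would analyse the occurrence of $xxx^R$ inside $h(w)=h_1(v)\cdot T$, where $T$ is the $h$-image of the tail $LS^{k_0}$ started at the parity of $|\psi(v)|$. If $|x|$ is below an explicit bound, then $xxx^R$ lies in a bounded window of $h(w)$, hence inside the $h$-image of a factor $u$ of $w$ of bounded length; the Lemma preceding this theorem confines such $u$ (and the factor crossing the $\psi(v)$--$LS^{k_0}$ seam) to a finite, explicitly enumerable list, each of whose $h$-images, in either parity, is checked to contain no short $xxx^R$. If $|x|$ is large, the occurrence straddles at least three complete $h_1$-blocks; using that $P,\overline{P},Q,\overline{Q}$ are palindromes obeying the $h_1$-suitability relations, I would show the occurrence is \emph{synchronised} with the block structure up to a bounded shift, so that it descends to a near-cube in $v$ itself — which, via Lemma~\ref{B-lemma} applied to $h_1$ together with a short direct check, forces $v$ to contain a word of $\mathcal{B}$ or one of the forbidden factors $L^3,\ SSL,\ SLSLS,\ \psi(L^3),\ \psi(SLSLS)$, contradicting $v\in\mathscr{L}$; equivalently, the occurrence descends to a genuinely shorter $x'x'x'^R$ in $h$ of a shorter word of $\mathscr{L}$, contradicting minimality.

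The hard part is exactly this last step: proving the synchronisation (a long $xxx^R$ must be block-aligned up to bounded error), together with the bookkeeping of which blocks occur complemented and of how the reversal $x^R$ meshes with the palindromic blocks $P,\overline{P},Q,\overline{Q}$ — the base case and the suitability computation for $h_1$ are routine by comparison.
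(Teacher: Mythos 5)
Your overall strategy---take a minimal counterexample and make it descend through the substitution $\psi$, exploiting that $\psi(S),\psi(L)$ are palindromes of odd length---is indeed the paper's strategy in spirit, and your computation that $h\circ\psi$ is again a parity-sensitive transducer whose blocks satisfy the suitability identities is correct. But the proposal has a genuine gap exactly where you flag ``the hard part'': the synchronisation claim, i.e.\ that a long occurrence of $xxx^R$ in $h(w)$ must be aligned (up to bounded error) with the block structure and therefore descends to a symbolic near-cube in $w$, is asserted (``I would show'') rather than argued, and it is not a routine consequence of suitability or of Lemma~\ref{B-lemma}. This is where essentially the entire content of the theorem lies. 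The paper's proof spends almost all of its length on precisely this point: a letter-by-letter case analysis (using that $000$, $111$, $0101$, $110011$, $010010010$ cannot occur, that $\ell=0010$ is always followed by $01$, etc.) showing first that $x$ must end in $\ell$, then that $x$ is forced to have the shape $\ell^R\hat{x}\ell$, so that the bracketing by $\ell$ and $\ell^R$ forces $w$ to contain a factor $uLuLu^R$ with $|u|$ odd; only then does a purely symbolic descent (the shortest such $u$ must have the form $SL\,\psi(u'')\,LS$, whence $u''Lu''L(u'')^R$ occurs one level down) give the contradiction, using only the short list of forbidden factors $L^3$, $SSL$, $SLSLS$, $\psi(L^3)$, $LLSLLSLLSLSL$, $\psi(SLSLS)$ from the unnumbered lemma---not $\mathcal{B}$. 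Your suggestion that the descent could be closed by ``Lemma~\ref{B-lemma} applied to $h_1$'' is therefore off-target: $\mathcal{B}$ was built to drive the $\Phi$-parsing lemmas for the upper bound, and membership of a near-cube's preimage in $\mathcal{B}$ is neither established nor plausible without the alignment argument you have not supplied.

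Two smaller points. First, your base case is glib: for $w\in\mathscr{L}_0$ the word $h(w)$ is a prefix of an ultimately periodic word, but ``decided by a finite inspection'' needs justification, since a priori $xxx^R$ could occur with $|x|$ comparable to $|w|$; the paper needs no separate base case because its descent is on the odd-length word $u$ in the factor $uLuLu^R$, taken shortest over all of $\mathscr{L}$ at once. Second, the dichotomy ``$|x|$ small versus $|x|$ straddling three $h_1$-blocks'' leaves unhandled the intermediate regime and the seam with the tail $LS^{k_0}$, which your sketch delegates to an unspecified finite check. As it stands, the proposal is a reasonable plan but not a proof: the block-alignment/descent step must be carried out in detail, as the paper does.
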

\begin{proof}
Suppose $w\in\mathscr{L}$, and $xxx^R$ is a non-empty factor of $h(w)$. Let $$W=\left((h(S)+h(L))(\overline{h(S)}+\overline{h(L)})\right)^*=\left((00100+00100100)(11011+11011011)\right)^*.$$ Thus $h(w)$ is a factor of a word of $W$. Note that none of $000$, $111$, $0101$, $1010$, $001011$, $110011$, $010010010$, is a factor of any word of $W$, nor thus, of $w$. Also, $\ell=0010$ is always followed by 01 in any word of $W$, while $\overline{\ell^R}=1011$ is always preceded by 01.

If $|x|\le 2$, then $h(w)$ contains a factor 000, 111, 010110 or 101011. The last two contain 0101, so this is impossible. Assume therefore that $|x|\ge 3$ and write $x = x'\alpha\beta\gamma$, where
$\alpha$, $\beta$, $\gamma\in\{0,1\}$. Then $\alpha\beta\gamma\gamma\beta\alpha$ is a factor of $xxx^R$. Suppose that $\gamma=0$. (The other case is similar.) Since 000 is not factor of $w$, we can assume that $\beta=1$. Since $110011$ is not a factor of $w$, $\alpha\beta\gamma=010$. If $|x|=3$, then $xxx^R$ is $010010010$, which is not a factor of $w$. We conclude that $|x|\ge 4$. Since $1010$ is not a factor of $w$, $\ell=0010$ is a suffix of $x$. Write $x=x^{\prime\prime}\ell,$ so that 
                         $$xxx^R=x^{\prime\prime}\ell x^{\prime\prime}\ell\ell^R(x^{\prime\prime})^R=
x^{\prime\prime}\ell x^{\prime\prime}h(L)(x^{\prime\prime})^R.$$
Since $x^{\prime\prime}\ell x^{\prime\prime}$ precedes $h(L)$ in a word of $W$, the length 4 suffix of $
x^{\prime\prime}\ell x^{\prime\prime}$ must be $1011$; since $x^{\prime\prime}$ follows $\ell$ in $h(w)$, it follows that $x^{\prime\prime}$ begins with $0$. Therefore, $|x^{\prime\prime}|\ge 5$. It follows that $x^{\prime\prime}$ must end with 11011, so that, in fact, $|x^{\prime\prime}|\ge 6$, and 011011 is a suffix of $x^{\prime\prime}$. If $|x^{\prime\prime}|=6$, then 
                         $$xxx^R=0110110010\hspace{.05in} 0110110010 \hspace{.05in} 0100110110=011011h(SSL)110110.$$
This forces $SSL$ to be a factor of $w$, which is impossible, since $w\in\mathscr{L}$. Thus $|x^{\prime\prime}|\ge 7.$

Since 0101 is not a factor of $w$, if suffix 011011 of $x^{\prime\prime}$ is preceded by 1, it is preceded by 11, and $\overline{h(L)}\ell$ is a suffix of $x$. This forces $xx^R$ to have 
$$\overline{h(L)}\ell\ell^R\overline{h(L)^R}=\overline{h(L)}h(L)\overline{h(L)}$$
as a factor, forcing $LLL$ to be a factor of $w$, which is impossible. We conclude that 0011011 is a suffix of $x^{\prime\prime}$. Since $x^{\prime\prime}$ follows $\ell$ in $w$, 01 must be a prefix of $x^{\prime\prime}$. Suppose $011$ is a prefix of $x^{\prime\prime}$. Since 0011011 is a suffix, 
then $x^{\prime\prime}\ell x^{\prime\prime}$ has factor $$0011011\ell 011=00\overline{h(S)}h(S)11$$ and $w$ has a factor $SSuL$ for some $u$; this is impossible. We conclude that $010$ is a prefix of $x^{\prime\prime}$; since $0101$ is not a factor of $w$, in fact, $0100=\ell^R$ is a prefix of $x^{\prime\prime}$. In total,
$$xxx^R=\ell^R\hat{x}\ell\hspace{.05in}\ell^R\hat{x}\ell \hspace{.05in}\ell^R\hat{x}^R\ell$$
The `bracketing' by $\ell$ and $\ell^R$ forces $w$ to contain a factor $uLuLu^R$, where $|u|$ is odd. 

Consider the shortest factor $uLuLu^R$ or $w$, where $|u|$ is odd.

If the last letter of $u$ is $L$, then $LLL$ is a central factor of $uLu$. This is impossible. Thus $S$ is a suffix of $u$. If $u=S$, then $uLuLu^R=SLSLS$, which is not a factor of any word of $\mathscr{L}$. We conclude that $|u|>1$, so that $|u|\ge 3$, since $|u|$ is odd.

Since $SSL$ is not a factor of $w$, the length 3 suffix of $uL$ is $LSL$. This makes $LSLSL$ a central factor of $uLu^R$. Since $SLSLS$ is not a factor of $w$, the length 3 suffix of $u$ is $LLS$. If $u=LLS$, then $Lu$ has prefix $LLL$,  which is not a factor of $w$. We conclude that $|u|\ge 5$.

Since neither of $LLL$ and $SS$ is a factor of $w$, we conclude that $LSLLS$ is the length 5 suffix of $u$. If $u=LSLLS$, then $uLuLu^R=LSLLS L LSLLS L SLLSL$, with illegal factor  $LLS L LSLLS L SL$. Thus $|u|\ge 7$.

If the length 7 suffix of $u$ is $LSLSLLS$, then a central factor $uLu^R$ is $LSLSLLSLSLLSLSL$, which is not a factor of $w$. We conclude that the length 7 suffix is $SLLSLLS$.

Write 
$w=\psi(v)LS^k$ for some $v\in\mathscr{L}$, some $k\ge 0$. Since $|w|_L>1$,  $v\ne\epsilon$.
Then $w$ has suffix $LLS^k$, and prefix $uLuLSL$ of $uLuLu^R$ must be a factor of $\psi(v)$. Let $L(LSL)^mL$ be a factor of $uLu$ where $m$ is as large as possible. Since $uLuLSL$ has suffix $LSLSL$, and  $uLuLSL$ is a factor of $\psi(v)$, word $L(LSL)^mLSLSL$ must be a factor of $uLuLSL$. If $m\ge 2$, then $uLuLu^R$ has illegal factor
$LLSLLSLLSLSL$. We conclude that $m=1$, so that $LLSLLSLL$ is not a factor of $uLu$

In the context of $uLu$, word $u$ follows the suffix $LLSLLSL$ of $uL$. Therefore, $u$ cannot have $L$  as a prefix or $uLu$ contains the factor $LLSLLSLL.$ It follows that $SL$ is a prefix of $u$. However, a prefix of $u$ cannot be $SLS$; otherwise $uLu$ would have factor $uLSLS$ which has illegal suffix $SLSLS$. It follows that the length 3 prefix of $u$ is $SLL$.

Write $$u=SL:Lu'SL:LSL:LS$$
The colons indicate boundaries in $u$ between instances of $\psi(S)$ and $\psi(L)$. Thus, we may write
$u=SL\psi(u^{\prime\prime})LS,$
for some word $u^{\prime\prime}$ in $\mathscr{L}$. Since $|\psi(S)|\equiv|\psi(L)|\equiv 1\mbox{ (modulo }2),$ we have
$$|u|\equiv|\psi(u^{\prime\prime})|\equiv|u^{\prime\prime}|\mbox{ (modulo }2).$$
Then 
\begin{eqnarray*}
uLuLu^R&=&SL\psi(u^{\prime\prime})LSLSL\psi(u^{\prime\prime})LSLSL(\psi(u^{\prime\prime}))^RLS\\
&=&SL\psi(u^{\prime\prime}Lu^{\prime\prime}L(u^{\prime\prime})^R)LS.
\end{eqnarray*}

Recall that $w=\psi(v)LS^k$. Although the suffix $LS$ of $uLuLu^R$ may  occur here as a prefix of $LS^k$, certainly $uLuLu^R(LS)^{-1}$ is in $\psi(v).$
We conclude that $u^{\prime\prime}Lu^{\prime\prime}L(u^{\prime\prime})^R$ is a factor of $\mathscr{L}$, where $u^{\prime\prime}$ has odd length shorter than $u$. This is a contradiction.
\end{proof}

\end{document}